\definecolor{mypurple}{rgb}{.4,.0,.5}
\definecolor{darkgreen}{rgb}{0, 0.4,0}
\newcommand{\dgr}[1]{\textcolor{darkgreen}{#1}}
\definecolor{purplebrown}{rgb}{0.5,0.1,0.6}
\newcommand{\bl}[1]{\textcolor{blue}{#1}}
\newcommand{\prp}[1]{\textcolor{purple}{#1}}
\definecolor{shadebrown}{rgb}{0.1,0.1,0.9}
\definecolor{lightblue}{rgb}{0.2,0,1}
\newtcbox{\xmyboxQ}{on line,
arc=7pt,
before upper={\rule[-3pt]{0pt}{10pt}},boxrule=1.1pt,
boxsep=0pt,left=6pt,right=6pt,top=0pt,bottom=0pt,enhanced, frame style image=blueshade.png,interior style image=goldshade.png}
\newtcbox{\xmybox}{on line,
arc=7pt,
before upper={\rule[-3pt]{0pt}{10pt}},boxrule=0pt,
boxsep=0pt,left=6pt,right=6pt,top=0pt,bottom=0pt,enhanced, coltext=blue, colback=white!10!yellow}
\newtcbox{\xmyboxa}{on line,
arc=7pt,
before upper={\rule[-3pt]{0pt}{10pt}},boxrule=0pt,
boxsep=0pt,left=6pt,right=6pt,top=0pt,bottom=0pt,enhanced, colback=white!10!yellow}
\newtcbox{\xmyboxb}{on line,
arc=7pt,
before upper={\rule[-3pt]{0pt}{10pt}},boxrule=1pt,colframe=darkgreen!100!blue,
boxsep=0pt,left=6pt,right=6pt,top=0pt,bottom=0pt,enhanced, colback=white!10!yellow}
\newtcbox{\xmyboxc}{on line,
arc=7pt,
before upper={\rule[-3pt]{0pt}{10pt}},boxrule=.7pt,colframe=blue!100!blue,
boxsep=0pt,left=6pt,right=6pt,top=0pt,bottom=0pt,enhanced, coltext=blue, colback=white!10!yellow}
\newtcbox{\xmytboxa}{on line,
arc=7pt,
before upper={\rule[-3pt]{0pt}{10pt}},boxrule=.0pt,colframe=pink!50!yellow,
boxsep=0pt,left=6pt,right=6pt,top=0pt,bottom=0pt,enhanced, coltext=white, colback=blue!40!red}
\newtcbox{\xmytboxb}{on line,
arc=7pt,
before upper={\rule[-3pt]{0pt}{10pt}},boxrule=.0pt,colframe=pink!50!yellow,
boxsep=0pt,left=6pt,right=6pt,top=0pt,bottom=0pt,enhanced, coltext=white, colback=white!40!green}
\def\y{{\bf y}}
\def\v{{\bf v}}
\def\x{{\bf x}}
\def\x{{\mathbf x}}
\def\v{{\bf v}}
\def\x{{\bf x}}
\def\y{{\bf y}}
\def\z{{\bf z}}
\def\h{{\bf h}}
\def\be{\begin{equation}}
\def\ee{\end{equation}}
\def\ba{\left[\begin{array}}
\def\ea{\end{array}\right]}
\def\v{{\bf v}}
\def\x{{\bf x}}
\def\y{{\bf y}}
\def\z{{\bf z}}
\def\1{{\bf 1}}
\def\g{{\bf g}}
\def\0{{\bf 0}}
\def\erf{\mbox{erf}}
\def\erfc{\mbox{erfc}}
\def\mR{{\mathbb R}}
\def\mE{{\mathbb E}}
\newtheorem{theorem}{Theorem}
\begin{document}

\begin{singlespace}

\title {Complexity analysis of the Controlled Loosening-up (CLuP) algorithm 
}
\author{
\textsc{Mihailo Stojnic
\footnote{e-mail: {\tt flatoyer@gmail.com}} }}
\date{}
\maketitle

\centerline{{\bf Abstract}} \vspace*{0.1in}

In our companion paper \cite{Stojnicclupint19} we introduced a powerful mechanism that we referred to as the Controlled Loosening-up (CLuP) for handling MIMO ML-detection problems. It turned out that the algorithm has many remarkable features and one of them, the \emph{computational complexity}, we discuss in more details in this paper. As was explained in \cite{Stojnicclupint19}, the CLuP is an iterative procedure where each iteration amounts to solving a simple quadratic program. This clearly implies that the key contributing factor to its overall computational complexity is the number of iterations needed to achieve a required precision. As was also hinted in \cite{Stojnicclupint19}, that number seems to be fairly low and in some of the most interesting scenarios often not even larger than $10$. Here we provide a Random Duality Theory based careful analysis that indeed indicates that a very small number of iterations is sufficient to achieve an excellent performance. A solid set of results obtained through numerical experiments is presented as well and shown to be in a nice agreement with what the theoretical analysis predicts. Also, as was the case in \cite{Stojnicclupint19}, we again focus only on the core CLuP algorithm but do mention on several occasions that the concepts that we introduce here are as remarkably general as those that we introduced in \cite{Stojnicclupint19} and can be utilized in the analysis of a large number of classes of algorithms applicable in the most diverse of scientific fields. Many results in these directions we will present in several of our companion papers.

\vspace*{0.25in} \noindent {\bf Index Terms: Controlled Loosening-up (CLuP); ML - detection; MIMO systems; Algorthms; Random duality theory}.

\end{singlespace}

\section{Introduction}
\label{sec:back}

In \cite{Stojnicclupint19} we revisited the MIMO ML-detection and introduced the so-called Controlled Loosening-up (CLuP) mechanism to solve it. Since \cite{Stojnicclupint19} is the introductory paper on this subject we used it only to introduce the most basic features of the CLuP algorithm and deferred the discussion regarding many of its key advanced properties to separate papers. One of these properties, the so-called \textbf{\emph{computational complexity}}, will be the topic of the main discussion in this paper.

From the discussion presented in \cite{Stojnicclupint19} it was rather clear that the main concepts behind CLuP are very general and applicable to many different problems and algorithms used for their solving. Consequently, it was then also clear that instead of MIMO ML-detection we could have chosen quite a few other problems to introduce the main ideas behind CLuP. However, given ML-detection's importance and popularity in various scientific and engineering communities ranging from the signal processing and information theory to statistics and machine learning we selected it as a convenient choice to quickly transcendent the CLuP's basics across all of these fields. To ensure the easiness of the presentation's following and the smoothness of the connection to the already presented results in \cite{Stojnicclupint19} here we follow the same pattern and use MIMO ML detection as the underlying problem of interests. Given that we have already revisited this problem in \cite{Stojnicclupint19} we will here try to avoid repeating many of the specifics already mentioned in \cite{Stojnicclupint19} and instead focus on some of the key differences.

However, to ensure that we have the needed problem setup properly introduced we do start with a brief description of MIMO linear system. As is well known, these types of systems are modelled as:
\begin{eqnarray}\label{eq:linsys1}
\y=A\x_{sol}+\sigma\v,
\end{eqnarray}
where $A\in\mR^{m\times n}$ is the so-called system matrix (typically assumed as known at the output in the so-called coherent scenarios which we will assume here), $\x_{sol}\in\mR^n$ and $\v\in\mR^m$  are signal and noise vectors and $\sigma$ is a noise scaling factor. Finally, $\y\in\mR^m$ is the vector the the output of the system. Many practical systems can be modelled this way, with the multi-antenna systems probably being the most popular annd well-known example in the fields of information theory and wireless communications. Similarly to what we did in \cite{Stojnicclupint19}, we here also consider a statistical scenario where the elements of $\v$ and $A$ are assumed to be zero-mean unit-variance i.i.d. Gaussian random variables. Also, when it comes to the system's dimensions, we will consider the so-called \emph{linear} regime, i.e. we will consider the regime where $n$ and $m$ are large but $m=\alpha n$ where $\alpha>0$ is a real number. Of course, as is usual the case with all random duality considerations, there is really no need to restrict to this regime from the technical point of view. It is just that the writing is a bit easier and the final results are often way more elegant if one actually imposes the linear regime. Now that we have all the necessary basics, we can introduce the basic MIMO ML-detection problem as the following optimization
\begin{eqnarray}\label{eq:ml1}
\hat{\x}=\min_{\x\in{\cal X}}\|\y-A\x\|_2,
\end{eqnarray}
where, as in \cite{Stojnicclupint19}, we assume the typical binary scenario, which means that ${\cal X}=\{-\frac{1}{\sqrt{n}},\frac{1}{\sqrt{n}}\}^n$. While we will here be interested in the binary scenario we do also mention that the above ML problem is very popular in many other considerations (its LASSO/SOCP variants are among the most fundamental problems in statistics, machine learning, and compressed sensing; for more on this see, e.g. \cite{StojnicGenLasso10,CheDon95,Tibsh96,DonMalMon10,BunTsyWeg07,vandeGeer08,MeinYu09}).

Depending if one is interested in solving the optimization problem in (\ref{eq:ml1}) exactly or approximately there are quite a few excellent algorithms that have been developed throughout the literature over the years (see, e.g. \cite{GolVanLoan96Book,GroLovSch93Book,vanMaarWar00,GoeWill95}). We leave a detailed discussion about the prior work for review papers and here just briefly mention that some of the very best results that in particular relate to the type of the problems that we are interested in here can be found in e.g. \cite{StojnicBBSD05,StojnicBBSD08,FinPhoSD85,HassVik05,JalOtt05}. As in \cite{Stojnicclupint19} (and earlier in \cite{StojnicBBSD05,StojnicBBSD08}), here the goal will be to approach the \emph{\textbf{exact}} solution. To that end \cite{Stojnicclupint19} introduced the following remarkably simple iterative procedure (the above mentioned CLuP):  let $\x^{(0)}\in{\cal X}=\{-\frac{1}{\sqrt{n}},\frac{1}{\sqrt{n}}\}^n$ ($\x^{(0)}$ can be either randomly generated or designed in a specific way) and consider the following
\begin{eqnarray}
\x^{(i+1)}=\frac{\x^{(i+1,s)}}{\|\x^{(i+1,s)}\|_2} \quad \mbox{with}\quad \x^{(i+1,s)}=\mbox{arg}\min_{\x} & & -(\x^{(i)})^T\x  \nonumber \\
\mbox{subject to} & & \|\y-A\x\|_2\leq r\nonumber \\
&& \x\in \left [-\frac{1}{\sqrt{n}},\frac{1}{\sqrt{n}}\right ]^n, \label{eq:clup1}
\end{eqnarray}
where of course, as discussed in \cite{Stojnicclupint19}, $r$ is the key parameter, the so-called \emph{radius}. As was also discussed in great detail in \cite{Stojnicclupint19} the choice for $r$ is of fundamental importance for the success of the algorithm. Figure \ref{fig:fignum2} illustrates both the theoretical and simulated CLuP's performance.
\begin{figure}[htb]
\centering
\centerline{\epsfig{figure=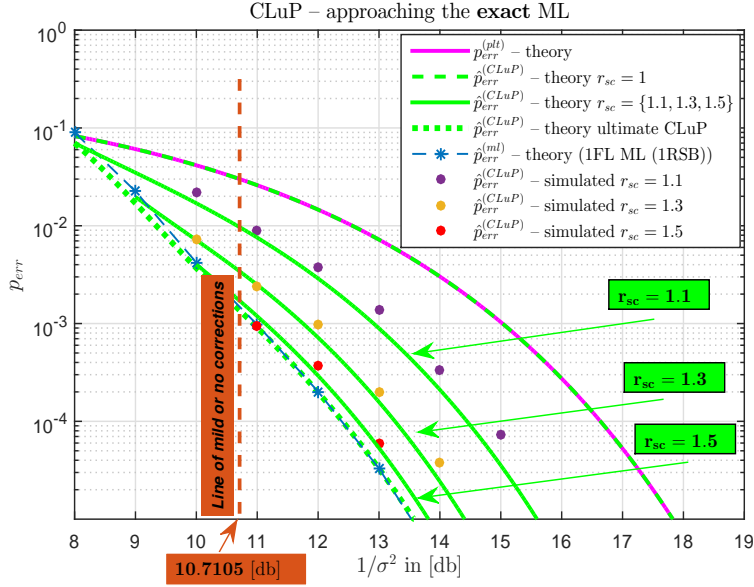,width=11.5cm,height=8cm}}
\caption{$p_{err}$ as a function of $1/\sigma^2$; $\alpha=0.8$ -- theory and simulations}
\label{fig:fignum2}
\end{figure}
Of course, all the details regarding the figure can be found in \cite{Stojnicclupint19}. Here we do mention only the key points that we view as of most relevance for the discussion that we present in this paper. Namely, as one can see from the figure, as $r$ increases from its minimal possible value $r_{plt}$ (see \cite{Stojnicclupint19} for details and a precise definition of $r_{plt}$), the CLuP's performance gets closer to the exact ML and already for $r=r_{sc}r_{plt}=1.5r_{plt}$ it is almost exactly where the predicated ML one is. We should also point out that one should note the appearance of the so-called vertical line of corrections. While we skip detailing the meaning of this line and refer instead to \cite{Stojnicclupint19}, we do mention that in this paper we will be interested in the regimes above this line where no major corrections discussed in \cite{Stojnicclupint19} are expected to take place. This is to ensure that we can focus only on one problem, \emph{\textbf{computational complexity}}, at a time and leave all others discussed in \cite{Stojnicclupint19} aside. Speaking of \emph{\textbf{computational complexity}}, we in Table \ref{tab:tabkeyclupcmpl} show how the CLuP algorithm really performs through the iterations. We selected in particular the above mentioned $r=r_{sc}r_{plt}=1.5r_{plt}$ scenario and chose a rather moderately small $n=800$. Already after $10$ iterations the CLuP's performance matches the theoretical prediction on \bl{\textbf{the fifth decimal level}} (the meaning of all relevant quantities is rather clear; we just add that $\hat{s}^{(k)}$ is the value of the CLuP's objective after the $k$-th iteration; it also goes without saying that all quantities given in the table are the expected values which due to an overwhelming concentration basically represent also the values themselves).
\begin{table}[h]
\caption{Change in $p_{err}^{(k)}$, $\hat{s}^{(k)}$, $\|\x^{(k,s)}\|_2^2$, and $(\x_{sol})^T\x^{(k,s)}$ as $k$ grows; $\alpha=0.8$; $r_{sc}=1.5$; $n=800$}\vspace{.1in}
\hspace{-0in}\centering
\footnotesize{
\begin{tabular}{||c||c|c|c|c||}\hline\hline
$k$ - iteration  & $p_{err}^{(k)}$ & $\hat{s}^{(k)}$ & $\hat{d}_2^{(k)}=\|\x^{(k,s)}\|_2^2$ & $\hat{d}_1^{(k)}=(\x_{sol})^T\x^{(k,s)}$ \\ \hline\hline
 $1$   &$0.079723  $ & $  0.17683  $ & $  0.68174  $ & $  0.71573 $ \\ \hline
 $2$   &$0.036420  $ & $  0.89707  $ & $  0.88133  $ & $  0.88448 $ \\ \hline
 $3$   &$0.014820  $ & $  0.95550  $ & $  0.94091  $ & $  0.94706 $ \\ \hline
 $4$   &$0.004763  $ & $  0.97799  $ & $  0.96899  $ & $  0.97579 $ \\ \hline
 $5$   &$0.001240  $ & $  0.98763  $ & $  0.97984  $ & $  0.98639 $ \\ \hline
 $6$   &$0.000317  $ & $  0.99090  $ & $  0.98309  $ & $  0.98946 $ \\ \hline
 $7$   &$0.000136  $ & $  0.99178  $ & $  0.98395  $ & $  0.99026 $ \\ \hline
 $8$   &$0.000083  $ & $  0.99202  $ & $  0.98419  $ & $  0.99048 $ \\ \hline
 $9$   &$0.000063  $ & $  0.99208  $ & $  0.98425  $ & $  0.99054 $ \\ \hline
$\bl{\mathbf{10}}$   &$\bl{\mathbf{0.000060}}  $ & $  \bl{\mathbf{0.99210}}  $ & $  \bl{\mathbf{0.98427}}  $ & $  \bl{\mathbf{0.99055}} $ \\ \hline\hline
\textbf{limit -- theory} & $\mathbf{0.000053}  $ & $\mathbf{0.99211}  $ & $\mathbf{0.98428}  $ & $\mathbf{0.99056}  $  \\ \hline\hline
\end{tabular}}
\label{tab:tabkeyclupcmpl}
\end{table}
In summary the mechanisms behind the CLuP algorithm introduce many remarkable properties. Two of them we will below particulary single out with respect to the MIMO ML (we leave out many other ones for the discussions regarding a host of other algorithms that we designed utilizing similar mechanisms). Namely, as one of the most fundamentally important problems at the intersection of the information theory, signal processing, statistics, machine learning and many other areas, MIMO ML has many great features that are typically of interest. Regarding the problem in (\ref{eq:ml1}) two of them are probably by far the most dominant.
\tcbset{colback=yellow!95!white,colframe=blue!95!white,fonttitle=\bfseries}
\begin{tcolorbox}[title=MIMO ML's two most fundamental theoretical/practical needs:]
\begin{enumerate}
  \item \textbf{\emph{To solve the problem in (\ref{eq:ml1}) \prp{\underline{exactly}}.}}
  \item \textbf{\emph{To solve (\ref{eq:ml1}) with the minimal needed complexity (ideally the \prp{\underline{polynomial}} one).}}
\end{enumerate}.\vspace{-.3in}
\end{tcolorbox}
This of course has been known for a long time as the heart of the matter when it comes to MIMO ML. As the above results and ultimately \cite{Stojnicclupint19} indicate CLuP manages to do rather well with respect to both of these features.
\tcbset{colback=yellow!95!white,colframe=blue!95!white,fonttitle=\bfseries}
\begin{tcolorbox}[title=CLuP's behavior regarding the MIMO ML's needs:]
\begin{enumerate}
  \item \textbf{\emph{CLuP does approach the \prp{\underline{exact}} ML.}}
  \item \textbf{\emph{Not only does the CLuP achieve the optimal performance, it does so through a \prp{\underline{fixed number}} of the simplest possible quadratic programming iterations.}}
\end{enumerate}.\vspace{-.3in}
\end{tcolorbox}

We will organize the presentation by splitting it into several parts. First we revisit the characterization of the algorithms's first iteration. Then we move to the main part which is the analysis of the second iteration. As it will soon be clear, this is the most important step of the analysis and already this very step contains all the key conceptual and technical details needed for all other steps that we will briefly touch upon afterwards. We will in parallel also provide a solid set of results obtained through simulations. As we will see, they will be of use in both, theoretical and practical aspects of the discussion. Towards the end we of course provide a few concluding remarks.

\section{CLuP -- first iteration performance analysis}
\label{sec:clupfirstit}

To analyze the computational complexity it is of course sufficient to just determine the number of iterations needed to run the algorithm to achieve the desired precision. However, we will here take a bit different and substantially more general approach. Namely, we will characterize the behavior of all critical components/parameters of the problem in each iteration. To start things off we of course first consider the first iteration. Many of the results that we present below follow as direct consequences of the main random duality theory concepts that we developed in a long line of work \cite{StojnicCSetam09,StojnicISIT2010binary,StojnicDiscPercp13,StojnicGenLasso10,StojnicGenSocp10,StojnicPrDepSocp10,StojnicRegRndDlt10}. In fact, some of the procedures will be very closely related to some of the procedures presented in \cite{Stojnicclupint19}. We will therefore try to skip repeating as many unnecessary details as possible and instead focus on the key differences between the analysis considered here and the corresponding ones considered in \cite{Stojnicclupint19}. We also do mention that not only are the analyses similar to the one that we present below of interest
in \cite{Stojnicclupint19} but this very same analysis is of interest overthere as well. However, in \cite{Stojnicclupint19}, such an analysis isn't for the purpose of the discussion of the algorithm's computational complexity but is rather utilized to ensure that the algorithm misses a specific stationary point right at the beginning.

We recall that the CLuP's first step (iteration) amounts to determining $\x^{(1)}$ as
\begin{eqnarray}
\x^{(1)}=\frac{\x^{(1,s)}}{\|\x^{(1,s)}\|_2} \quad \mbox{with}\quad \x^{(1,s)}=\mbox{arg}\min_{\x} & & -(\x^{(0)})^T\x  \nonumber \\
\mbox{subject to} & & \|\y-A\x\|_2\leq r\nonumber \\
&& \x\in \left [-\frac{1}{\sqrt{n}},\frac{1}{\sqrt{n}}\right ]^n. \label{eq:avoidclup1}
\end{eqnarray}
The analysis of the above optimization problem is another simple exercise within the \bl{\textbf{Random Duality Theory}} (RDT). We will study way more involved problems than this one in one of our companion papers, and we may on occasion revisit this one again overthere in a bit more detail. Here though we will just sketch the RDT analysis relying on what we presented in \cite{Stojnicclupint19} and in a host of our earlier papers \cite{StojnicCSetam09,StojnicCSetamBlock09,StojnicISIT2010binary,StojnicDiscPercp13,StojnicUpper10,StojnicGenLasso10,StojnicGenSocp10,StojnicPrDepSocp10,StojnicRegRndDlt10,Stojnicbinary16fin,Stojnicbinary16asym}.
The above problem is of course very similar to the problems considered in \cite{Stojnicclupint19} and a majority of the discussion that applied overthere will be in place here again. As we are now not focused on all the key parameters that we considered in \cite{Stojnicclupint19} some steps can be done even faster. To that end we first quickly rewrite (\ref{eq:avoidclup1}) as
\begin{eqnarray}
\min_{\z} & & -(\x^{(0)})^T(\x_{sol}-\z)  \nonumber \\
\mbox{subject to} & & \|\sigma\v+A\z\|_2\leq r\nonumber \\
&& \z\in \left [0,2/\sqrt{n}\right ]^n, \label{eq:avoidclup1a1}
\end{eqnarray}
and further as
\begin{eqnarray}
\min_{\z} & & (\x^{(0)})^T\z  \nonumber \\
\mbox{subject to} & & \|\sigma\v+A\z\|_2\leq r\nonumber \\
&& \z\in \left [0,2/\sqrt{n}\right ]^n. \label{eq:avoidclup1a2}
\end{eqnarray}
Relying on the same concentration strategy that we introduced in \cite{StojnicCSetam09,StojnicCSetamBlock09,StojnicISIT2010binary,StojnicDiscPercp13,StojnicUpper10,StojnicGenLasso10,StojnicGenSocp10,StojnicPrDepSocp10,StojnicRegRndDlt10,Stojnicbinary16fin,Stojnicbinary16asym} and considered in \cite{Stojnicclupint19}, we here also set $\|\z\|_2^2=c_{1,z}$ and $(\x_{0})^T\z=s_1$. Then the following is the object of interest
\begin{eqnarray}
\xi_{p,1}(\alpha,\sigma,c_{1,z},s_1)=\lim_{n\rightarrow\infty}\frac{1}{\sqrt{n}}\mE \min_{\z} & & \|\sigma\v+A\z\|_2  \nonumber \\
\mbox{subject to} & & \|\z\|_2^2=c_{1,z}\nonumber \\
& & (\x^{(0)})^T\z=s_1 \nonumber \\
&& \z\in \left [0,2/\sqrt{n}\right ]^n. \label{eq:avoidclup1a3}
\end{eqnarray}
Now we are in position to mimic what we presented in early sections of \cite{Stojnicclupint19}. However, we do mention right here that below we will try to avoid as many of the unnecessary repetitive explanations as possible and instead focus on the key differences. As in \cite{Stojnicclupint19}, we again start by doing the RDT's first step which amounts to forming the deterministic Lagrange dual.

\vspace{.1in}
\noindent \xmyboxc{\bl{\emph{\textbf{1. First step -- \dgr{Forming the deterministic Lagrange dual} }}}}

\vspace{.1in}
Again following a large body of our earlier work we have (see, e.g. (\cite{StojnicCSetam09,StojnicISIT2010binary,StojnicDiscPercp13,StojnicGenLasso10,StojnicGenSocp10,StojnicPrDepSocp10,StojnicRegRndDlt10}))
\begin{eqnarray}
\min_{\z}\max_{\|\lambda\|_2=1,\gamma,\nu} & & \lambda^T\left ([A \quad \v]\begin{bmatrix}\z\\\sigma\end{bmatrix} \right )+\nu((\x^{(0)})^T\z-s_1)+\gamma (\|\x\|_2^2-c_{1,z}) \nonumber \\
\mbox{subject to} & & \z\in \left [0,2/\sqrt{n}\right ]^n. \label{eq:avoidclup8}
\end{eqnarray}
As in \cite{Stojnicclupint19}, given that we are interested in a statistical and large dimensional scenario, $\nu$ and $\gamma$ will concentrate and as scalars can be discretized so that the resulting optimization over these two quantities can be taken outside
\begin{eqnarray}
\max_{\gamma,\nu}\min_{\z}\max_{\|\lambda\|_2=1} & & \lambda^T\left ([A \quad \v]\begin{bmatrix}\z\\\sigma\end{bmatrix} \right )+\nu((\x^{(0)})^T\z-s_1)+\gamma (\|\x\|_2^2-c_{1,z}) \nonumber \\
\mbox{subject to} & & \z\in \left [0,2/\sqrt{n}\right ]^n. \label{eq:avoidclup9}
\end{eqnarray}

\vspace{.1in}
\noindent \xmyboxc{\bl{\emph{\textbf{2. Second step -- \dgr{Forming the Random dual} }}}}

\vspace{.1in}
Following further the principles of the analysis presented in \cite{Stojnicclupint19} we again introduce the so-called \bl{\textbf{random dual}}. Let $\bar{{\cal Z}}=\left [0,2/\sqrt{n}\right ]^n$. In this particular case the random dual is the following problem
\begin{equation}
\max_{\gamma,\nu}\min_{\z\in \bar{{\cal Z}}}\max_{\|\lambda\|_2=1} \lambda^T\g\sqrt{\|\z\|_2^2+\sigma^2}+\|\lambda\|_2(\h^T\z+h_0\sigma) +\nu((\x^{(0)})^T\z-s_1)+\gamma (\|\z\|_2^2-c_{1,z}),\\ \label{eq:avoidclup10}
\end{equation}
where we again have as earlier that the components of the newly introduced $m$ and $n$ dimensional vectors $\g$ and $\h$ are i.i.d. standard normals and $h_0$ is yet another standard normal independent of all other random variables. One here also observes that the minus sign in front of the second term typically present in some of the analysis in \cite{Stojnicclupint19} is not that important here and we remove it. Let $\xi_{RD}^{(1)}(\alpha,\sigma;c_{1,z},s_1,\gamma,\nu)$ be the following
\begin{equation}
\lim_{n\rightarrow\infty}\frac{1}{\sqrt{n}}\mE\min_{\z\in \bar{{\cal Z}}}\max_{\|\lambda\|_2=1} \lambda^T\g\sqrt{\|\z\|_2^2+\sigma^2}+\|\lambda\|_2(\h^T\z+h_0\sigma) +\nu((\x^{(0)})^T\z-s_1)+\gamma (\|\z\|_2^2-c_{1,z}). \label{eq:avoidclup10a}
\end{equation}

\vspace{.0in}
\noindent \xmyboxc{\bl{\emph{\textbf{3. Third step -- \dgr{Handling the Random dual} }}}}

\vspace{.1in}
Finally, in the third step we proceed with the analysis of the above \bl{\textbf{random dual}} following once again step by step the strategy outlined in \cite{StojnicCSetam09,StojnicISIT2010binary,StojnicDiscPercp13,StojnicGenLasso10,StojnicGenSocp10,StojnicPrDepSocp10,StojnicRegRndDlt10}.  The inner optimization over $\lambda$ is again very easy and one has
\begin{eqnarray}
\min_{c_1}\max_{\gamma,\nu}\min_{\z} & & \|\g\|_2\sqrt{c_{1,z}+\sigma^2}+(\h^T\z+h_0\sigma) +\nu((\x^{(0)})^T\z-s_1)+\gamma (\|\z\|_2^2-c_{1,z}) \nonumber \\
\mbox{subject to} & & \z\in \left [0,2/\sqrt{n}\right ]^n. \label{eq:avoidclup11}
\end{eqnarray}
Following again say \cite{StojnicDiscPercp13} we define
\begin{eqnarray}
f_{box,1}(\h;c_{1,z},s_1)=\max_{\gamma,\nu}\min_{\x} & & \h^T\z +\nu((\x^{(0)})^T\z-s_1)+\gamma (\|\z\|_2^2-c_{1,z}) \nonumber \\
\mbox{subject to} & & \z\in \left [0,2/\sqrt{n}\right ]^n.\label{eq:avoidclup12}
\end{eqnarray}
 With the addition of the $s_1$ constraint this is now literally identical to the box constrained problem considered in \cite{StojnicDiscPercp13} and we could immediately use the solution given there with a little bit of modification similar to the ones presented in \cite{Stojnicclupint19} to account for $s_1$ and $\nu$. Basically, instead of (110) from \cite{StojnicDiscPercp13} one now has
\begin{eqnarray}
f_{box,1}(\h;c_2,c_1)  =  \max_{\gamma,\nu} & & \frac{1}{\sqrt{n}}\left (\sum_{i=1}^{n}f_{box,1}^{(1)}(\h_i,\gamma,\nu)\right )-\nu s_1\sqrt{n}-\gamma c_{1,z}\sqrt{n},\label{eq:avoidclup13}
\end{eqnarray}
where
\begin{equation}
f_{box,1}^{(1)}(\h_i,\gamma,\nu)=\begin{cases}0, & \h_i+\nu\x^{(0)}_i\geq 0\\
-\frac{(\h_i+\nu\x^{(0)}_i)^2}{4\gamma}, & -4\gamma\leq \h_i+\nu\x^{(0)}_i\leq 0\\
2(\h_i+\nu\x^{(0)}_i)+4\gamma, & \h_i+\nu\x^{(0)}_i\leq -4\gamma,
\end{cases}\label{eq:avoidclup14}
\end{equation}
and $\gamma$ and $\nu$ are $\sqrt{n}$ scaled versions of $\gamma$ and $\nu$ from (\ref{eq:avoidclup12}) and $\x^{(0)}_i$ in (\ref{eq:avoidclup14}) is also $\sqrt{n}$ scaled (basically it is just the sign of the initial $\x^{(0)}_i$). Utilizing further the strategies outlined in \cite{Stojnicclupint19} one also has for the optimizing $\z_i$
\begin{equation}
\z_i=\frac{1}{\sqrt{n}}\min \left (\max\left (0,-\left (\frac{\h+\nu\x^{(0)}_i}{2\gamma}\right )\right ),2\right ).\label{eq:avoidclup14a}
\end{equation}
Assuming that the initial $\x^{(0)}$ has $\rho n$ components equal to $\frac{1}{\sqrt{n}}$ and $(1-\rho) n$ components equal to $-\frac{1}{\sqrt{n}}$ and after solving the integrals one has
\begin{equation}
\mE f_{box,1}^{(1)}(\h_i,\gamma,\nu)=\rho I_{1,1}(\gamma,\nu)+(1-\rho) I_{1,1}(\gamma,-\nu)+\rho I_{2,1}(\gamma,\nu)+(1-\rho) I_{2,1}(\gamma,-\nu),\label{eq:avoidclup15}
\end{equation}
where
\begin{eqnarray}
I_{1,1}(\gamma,\nu) &  = & -(exp(-0.5(4 \gamma + \nu)^2) (\nu - 4 \gamma) + \sqrt{\pi/2} (\nu^2 + 1) \erf(2\sqrt{2}\gamma + 1/\sqrt{2}\nu) \nonumber \\
& & - \sqrt{\pi/2} (\nu^2 + 1) \erf(\nu/\sqrt{2}) - exp(-0.5 \nu^2) \nu)/(4 \sqrt{2 \pi}\gamma) \nonumber \\
I_{2,1}(\gamma,\nu)  &  = & (4\gamma+2\nu).5\erfc((4\gamma+\nu)/\sqrt{2})-2exp(-1/2(4\gamma+\nu)^2)/\sqrt{2\pi}.
\label{eq:avoidclup16}
\end{eqnarray}
Finally a combination of (\ref{eq:avoidclup10})-(\ref{eq:avoidclup16}) gives
\begin{equation}
\xi_{RD}^{(1)}(\alpha,\sigma;c_{1,z},s_1,\gamma,\nu)=\sqrt{\alpha}\sqrt{c_{1,z}+\sigma^2}+\mE f_{box,1}^{(1)}(\h_i,\gamma,\nu)-\nu s_1-\gamma c_{1,z}. \label{eq:avoidclup17}
\end{equation}
The following theorem summarizes what we presented above.
\begin{theorem}(CLuP -- RDT estimate -- first iteration)
Let $\xi_{p,1}(\alpha,\sigma,c_{1,z},s_1)$ and $\xi_{RD}^{(1)}(\alpha,\sigma;c_2,c_1,\gamma,\nu)$ be as in (\ref{eq:avoidclup1a3}) and (\ref{eq:avoidclup17}), respectively. Then \begin{equation}
\xi_{p,1}(\alpha,\sigma,c_{1,z},s_1)= \max_{\gamma,\nu}\xi_{RD}^{(1)}(\alpha,\sigma;c_{1,z},s_1,\gamma,\nu).\label{eq:thmavoidcluprd1}
\end{equation}
Consequently,
\begin{equation}
\min_{c_{1,z}}\xi_{p,1}(\alpha,\sigma,c_{1,z},s_1)= \min_{c_{1,z}}\max_{\gamma,\nu}\xi_{RD}^{(1)}(\alpha,\sigma;c_{1,z},s_1,\gamma,\nu).
\label{eq:thmcluprd2}
\end{equation}\label{thm:avoidcluprd1}
\end{theorem}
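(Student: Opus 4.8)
The plan is to read (\ref{eq:thmavoidcluprd1}) as an instance of the Random Duality Theory identity established in \cite{StojnicCSetam09,StojnicISIT2010binary,StojnicDiscPercp13,StojnicGenLasso10,StojnicGenSocp10,StojnicPrDepSocp10,StojnicRegRndDlt10} and to assemble the three steps already sketched in the text into a rigorous chain. First I would fix $(c_{1,z},s_1)$ in the compact region for which the set $\{\z\in[0,2/\sqrt n]^n:\|\z\|_2^2=c_{1,z},\,(\x^{(0)})^T\z=s_1\}$ is nonempty, so that $\xi_{p,1}$ is well defined. Then I rewrite the objective via the dual-norm identity $\|\sigma\v+A\z\|_2=\max_{\|\lambda\|_2=1}\lambda^T[A\ \v]\begin{bmatrix}\z\\\sigma\end{bmatrix}$ and Lagrange-dualize the two equality constraints with multipliers $\gamma,\nu$, which is exactly (\ref{eq:avoidclup8}). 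Since the whole saddle expression is a Lipschitz functional of the Gaussian data, it concentrates; the optimal $(\gamma,\nu)$ concentrate on deterministic values, and after discretizing them on a fine grid and taking a union bound one may pull $\max_{\gamma,\nu}$ outside, arriving at (\ref{eq:avoidclup9}).

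The core of the argument is the second step: applying the Gaussian comparison (Gordon-type) inequality underlying the RDT machinery of the above references to the bilinear term $\lambda^T[A\ \v]\begin{bmatrix}\z\\\sigma\end{bmatrix}$. Taking $w=\begin{bmatrix}\z\\\sigma\end{bmatrix}$ (so $\|w\|_2=\sqrt{\|\z\|_2^2+\sigma^2}$) and $u=\lambda$, the auxiliary process is precisely $\lambda^T\g\sqrt{\|\z\|_2^2+\sigma^2}+\|\lambda\|_2(\h^T\z+h_0\sigma)$, which is the \bl{random dual} (\ref{eq:avoidclup10}) and yields $\xi_{RD}^{(1)}$ in (\ref{eq:avoidclup10a}). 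The one-sided comparison gives the bound of the primal by the random dual. For the matching direction I would invoke the convex--concave structure of (\ref{eq:avoidclup9}) in $(\z,\lambda)$ over the convex sets $\bar{{\cal Z}}$ and $\{\|\lambda\|_2\le 1\}$ for fixed $\gamma\ge 0$, for which the comparison is tight — the primal and auxiliary optimal values concentrate on the same deterministic limit — and then combine this with standard Gaussian concentration so that $\lim_{n\to\infty}$ and $\mE$ may be interchanged with the min--max and with the grid maximum over $(\gamma,\nu)$. (Dropping the minus sign in front of $\h^T\z+h_0\sigma$ is harmless since $\h$ is symmetric and we work with expectations.)

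It then remains to reduce (\ref{eq:avoidclup10a}) to the closed form (\ref{eq:avoidclup17}). I would carry out the inner maximization over $\lambda$, which produces $\|\g\|_2\sqrt{c_{1,z}+\sigma^2}$ and sends $\|\lambda\|_2\to 1$, giving (\ref{eq:avoidclup11}); the remaining minimization over $\z$ decouples coordinatewise, so it is literally the box-constrained program of \cite{StojnicDiscPercp13} augmented by the linear $\nu$-term, with the per-coordinate minimizer (\ref{eq:avoidclup14a}) and per-coordinate value (\ref{eq:avoidclup14}). Splitting the $n$ coordinates into the $\rho n$ indices with $\x^{(0)}_i=\tfrac{1}{\sqrt n}$ and the $(1-\rho)n$ with $\x^{(0)}_i=-\tfrac{1}{\sqrt n}$ and evaluating the Gaussian integrals gives (\ref{eq:avoidclup15})--(\ref{eq:avoidclup16}); collecting all terms yields (\ref{eq:avoidclup17}), i.e. $\xi_{p,1}=\max_{\gamma,\nu}\xi_{RD}^{(1)}$. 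The ``Consequently'' part (\ref{eq:thmcluprd2}) is then immediate: $c_{1,z}$ ranges over a fixed compact interval on which the concentration is uniform, so $\min_{c_{1,z}}$ commutes with $\mE$ and $\lim_{n\to\infty}$ and can be applied to both sides of (\ref{eq:thmavoidcluprd1}).

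The main obstacle I anticipate is not a single computation but the rigorous justification of the \emph{equality} — rather than a mere one-sided bound — in the comparison step when the multiplier $\gamma$ is a priori of unrestricted sign: one must argue that the relevant saddle point lies in the region $\gamma\ge 0$, where the objective is convex in $\z$ and the tight (convex) version of the comparison applies, and then control the $(\gamma,\nu)$ discretization together with the uniform exchange of $\lim_{n\to\infty}$, $\mE$, $\min_{c_{1,z}}$, and $\max_{\gamma,\nu}$. Once that is in place, everything else reduces to the routine RDT manipulations already carried out in \cite{Stojnicclupint19,StojnicDiscPercp13}.
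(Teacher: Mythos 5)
Your proposal is correct and follows essentially the same route as the paper: the paper's proof simply invokes the derivation of Section~2 (deterministic Lagrange dual, random dual, coordinatewise box-constrained evaluation leading to (\ref{eq:avoidclup17})) together with the strong random duality guaranteed by the convexity of the underlying program, exactly the chain you spell out in more detail. Your added care about concentration, the discretization of $(\gamma,\nu)$, and tightness of the Gaussian comparison in the convex regime is precisely what the paper delegates to the cited RDT references, so there is no substantive difference.
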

\begin{proof}
Follows from the above derivation and the general RDT concepts presented in \cite{StojnicCSetam09,StojnicISIT2010binary,StojnicDiscPercp13,StojnicGenLasso10,StojnicGenSocp10,StojnicPrDepSocp10,StojnicRegRndDlt10} and the fact that the \bl{\textbf{strong random duality}} trivially holds.
\end{proof}
Given that the strong random duality is in place one can continue further and obtain the exact estimates for all other relevant quantities. We formalize that below.

\subsection{CLuP -- summary of the first iteration performance analysis}
\label{sec:clupfirstitsummary}

Given that the above presentation may have gone a bit too much into a mathematical analysis, we below present a few keys steps one needs to perform to actually calculate pretty much all quantities of interest. Of course, we do emphasize that all of that is possible precisely because of the above analysis.

\vspace{.1in}
\noindent \xmyboxc{\bl{\emph{\textbf{Summarized formalism to handle the CLuP's first iteration }}}}

First consider the following optimization problem
\begin{eqnarray}
\{\hat{\nu}^{(1)},\hat{\gamma}^{(1)},\hat{c}_{1,z}^{(1)},\hat{s}_1^{(1)}\}=\mbox{arg} \min_{s_1} & & s_1 \nonumber \\
\mbox{subject to} & & \min_{0\leq c_{1,z}\leq 4}\max_{\gamma,\nu}\xi_{RD}^{(1)}(\alpha,\sigma;c_{1,z},s_1,\gamma,\nu)= r.\label{eq:avoidclup17a}
\end{eqnarray}Let
\begin{eqnarray}
s_{x,1}(\gamma,\nu) & = & -\nu/2/\gamma(.5\erfc(\nu/\sqrt{2})-.5\erfc((\nu+4\gamma)/\sqrt{2}))\nonumber \\
& & +1/2/\gamma/\sqrt{2\pi}(exp(-\nu^2/2)-exp(-(4\gamma+\nu)^2/2))\nonumber \\
s_{xsq,1}(\gamma,\nu) & = & -I_{1,1}(\gamma,\nu)/\gamma\nonumber \\
s_{x,2}(\gamma,\nu) & = & 2(.5\erfc((4\gamma+\nu)/\sqrt{2}))\nonumber \\
s_{xsq,2}(\gamma,\nu) & = & 2s_{x,2}.\label{eq:avoidclup18}
\end{eqnarray}
Utilizing (\ref{eq:avoidclup14a}) we have
\begin{eqnarray}
\sqrt{n}\mE\z_i & = & \rho s_{x,1}(\hat{\gamma}^{(1)},\hat{\nu}^{(1)})+(1-\rho)s_{x,1}(\hat{\gamma}^{(1)},-\hat{\nu}^{(1)})+\rho s_{x,2}(\hat{\gamma}^{(1)},\hat{\nu}^{(1)})+(1-\rho)s_{x,2}(\hat{\gamma}^{(1)},-\hat{\nu}^{(1)})\nonumber \\
n\mE\z_i^2 & = & \rho s_{xsq,1}(\hat{\gamma}^{(1)},\hat{\nu}^{(1)})+(1-\rho)s_{xsq,1}(\hat{\gamma}^{(1)},-\hat{\nu}^{(1)})+\rho s_{xsq,2}(\hat{\gamma}^{(1)},\hat{\nu}^{(1)})+(1-\rho)s_{xsq,2}(\hat{\gamma}^{(1)},-\hat{\nu}^{(1)}).\nonumber \\\label{eq:avoidclup19}
\end{eqnarray}
Given that $\x_i=\x_{sol}-\z_i$ we finally also have
\begin{eqnarray}
\sqrt{n}\mE\x_i & = & 1-(\rho s_{x,1}(\hat{\gamma}^{(1)},\hat{\nu}^{(1)})+(1-\rho)s_{x,1}(\hat{\gamma}^{(1)},-\hat{\nu}^{(1)})+\rho s_{x,2}(\hat{\gamma}^{(1)},\hat{\nu}^{(1)})+(1-\rho)s_{x,2}(\hat{\gamma}^{(1)},-\hat{\nu}^{(1)}))\nonumber \\
n\mE\x_i^2 & = & \rho s_{xsq,1}(\hat{\gamma}^{(1)},\hat{\nu}^{(1)})+(1-\rho)s_{xsq,1}(\hat{\gamma}^{(1)},-\hat{\nu}^{(1)})+\rho s_{xsq,2}(\hat{\gamma}^{(1)},\hat{\nu}^{(1)})+(1-\rho)s_{xsq,2}(\hat{\gamma}^{(1)},-\hat{\nu}^{(1)})\nonumber \\
&&+2\mE\x_i-1,\label{eq:avoidclup20}
\end{eqnarray}
and
\begin{eqnarray}
\mE((\x_{sol})^T\x) \hspace{-.08in}& = & \hspace{-.08in}1-(\rho s_{x,1}(\hat{\gamma}^{(1)},\hat{\nu}^{(1)})+(1-\rho)s_{x,1}(\hat{\gamma}^{(1)},-\hat{\nu}^{(1)})+\rho s_{x,2}(\hat{\gamma}^{(1)},\hat{\nu}^{(1)})+(1-\rho)s_{x,2}(\hat{\gamma}^{(1)},-\hat{\nu}^{(1)}))\nonumber \\
\mE\|\x\|_2^2 \hspace{-.08in} & = & \hspace{-.08in} \rho s_{xsq,1}(\hat{\gamma}^{(1)},\hat{\nu}^{(1)})+(1-\rho)s_{xsq,1}(\hat{\gamma}^{(1)},-\hat{\nu}^{(1)})+\rho s_{xsq,2}(\hat{\gamma}^{(1)},\hat{\nu}^{(1)})+(1-\rho)s_{xsq,2}(\hat{\gamma}^{(1)},-\hat{\nu}^{(1)})\nonumber \\
&&+2\mE((\x_{sol})^T\x)-1.\label{eq:avoidclup21}
\end{eqnarray}
Of course, given that the strong random duality is in full power the above quantities are actually the concentrating values and the concentration is exponential in $n$. Finally, utilizing (\ref{eq:avoidclup14a}) one can also get the estimate for the probability of error
\begin{equation}\label{eq:avoidclup21a}
  p_{err}^{(1)}=1-P\left (\z_i\leq \frac{1}{\sqrt{n}}\right )=1-\left (\rho\left (\frac{1}{2}\erfc\left ( \frac{-2\hat{\gamma}^{(1)}-\hat{\nu}^{(1)}}{\sqrt{2}}\right )\right )+(1-\rho)\left ( \frac{1}{2}\erfc\left ( \frac{-2\hat{\gamma}^{(1)}+\hat{\nu}^{(1)}}{\sqrt{2}}\right )\right )\right ).
\end{equation}

\subsection{Numerical results -- first iteration}
\label{sec:clupfirstnum}

In this section we present a set of numerical results that relate to the above analysis. Numerical analysis is needed for both, the theoretical values discussed above and for their simulated counterparts. We start, with the theoretical predictions and the numerical evaluations of the critical parameters discussed in the analysis presented above. In Table \ref{tab:tabavoidnum1} we show the theoretical values for various system parameters obtained based on Theorem \ref{thm:avoidcluprd1} for two different values of SNR, $1/\sigma^2=10$[db] and $1/\sigma^2=13$[db].
\begin{table}[h]
\caption{\textbf{Theoretical} values for various system parameters obtained utilizing Theorem \ref{thm:avoidcluprd1}}\vspace{.1in}
\hspace{-0in}\centering
\small{
\begin{tabular}{||c||c|c||c|c||c|c|c|c||}\hline\hline
$1/\sigma^2 $[db]  & $\hat{\nu}^{(1)}$ & $\hat{\gamma}^{(1)}$ & $\hat{c}_{1,z}^{(1)}$ & $\hat{s}_1^{(1)}$ &   $\xi_{RD}^{(1)} $ & $p_{err}^{(1)} $  & $\|\x^{(1,s)}\|_2^2$ &
$(\x_{sol})^T\x^{(1,s)}$ \\ \hline\hline
$10  $ & $\mathbf{0.5075}  $ & $\mathbf{0.6816}  $ & $\mathbf{0.3306}  $ & $\mathbf{-0.1844}  $ & $\mathbf{0.2252}  $ & $\mathbf{0.1134}  $ & $\mathbf{0.6749}  $ & $\mathbf{0.6722}  $ \\ \hline
$13  $ & $\mathbf{0.4953}  $ & $\mathbf{0.9420}  $ & $\mathbf{0.1753}  $ & $\mathbf{-0.1314}  $ & $\mathbf{0.1594}  $ & $\mathbf{0.0456}  $ & $\mathbf{0.7009}  $ & $\mathbf{0.7628}  $ \\ \hline\hline
\end{tabular}}
\label{tab:tabavoidnum1}
\end{table}

\subsubsection{Simulations -- first iteration}
\label{sec:clupfirstsim}

To give a bit of a feeling as to what kind of precision level the \bl{\textbf{random duality theory}} typically achieves even for moderate problem dimensions we in Table \ref{tab:tabavoidnum2} show the corresponding simulated values. The simulated values are obtained for $\alpha=0.8$, $n=400$, and $r_{sc}=1.3$. Given that $r_{plt}=.1226$ this implies that $\xi_{RD}^{(1)}=r=r_{sc}r_{plt}=.1594$.
\begin{table}[h]
\caption{\textbf{Theoretical}/\bl{\textbf{simulated}} values for various system parameters obtained based on Theorem \ref{thm:avoidcluprd1}}\vspace{.1in}
\hspace{-0in}\centering
\small{
\begin{tabular}{||c||c||c|c|c|c||}\hline\hline
$1/\sigma^2 $[db]  &  $\hat{s}_1^{(1)}$ &   $\xi_{RD}^{(1)} $ & $p_{err}^{(1)} $  & $\|\x^{(1,s)}\|_2^2$ &
$(\x_{sol})^T\x^{(1,s)}$ \\ \hline\hline
$10  $ & $\mathbf{-0.1844  }/\bl{\mathbf{0.1845  }}$ & $\mathbf{-0.2252  }/\bl{\mathbf{0.2252  }}$ & $\mathbf{0.1134  }/\bl{\mathbf{0.1133  }}$ & $\mathbf{0.6749  }/\bl{\mathbf{0.6769  }}$ & $\mathbf{0.6722  }/\bl{\mathbf{0.6719  }}$ \\ \hline
$13  $ & $\mathbf{-0.1314  }/\bl{\mathbf{0.1316  }}$ & $\mathbf{-0.1594  }/\bl{\mathbf{0.1594  }}$ & $\mathbf{0.0456  }/\bl{\mathbf{0.0483  }}$ & $\mathbf{0.7009  }/\bl{\mathbf{0.7005  }}$ & $\mathbf{0.7628  }/\bl{\mathbf{0.7596  }}$ \\ \hline\hline
\end{tabular}}
\label{tab:tabavoidnum2}
\end{table}
The results given in Tables \ref{tab:tabavoidnum1} and \ref{tab:tabavoidnum2} can be utilized (as discussed in \cite{Stojnicclupint19}) to ensure that one circumvents one of the stationary points right at the beginning after the first iteration. Looking at the tables one observes $c_2=0.6749$ (or if one views the simulated value $c_2=0.6769$) which is substantially above $0.46075$ given in \cite{Stojnicclupint19}. As concluded in \cite{Stojnicclupint19}, since the CLuP's objective never decreases one is then indeed reassured that the lower stationary point will be circumvented. Of course, as mentioned above and since it is basically trivial within the concepts of the \bl{\textbf{random duality theory}} we will not repeatedly stress throughout the paper but do mention here that whenever we determine certain expected value of a quantity associated with a vector as a whole that will be the concentrating point and the underlying concentration is rather overwhelming (exponential in $n$).

We also in Table \ref{tab:tabavoidnum3} show how the results change as the dimension $n$ changes. As we have mentioned above, already for $n=400$ the results are fairly close to the theoretical predictions. For $n=1600$ they are almost identical across all key parameters. Also, to make notation a bit easier in the above derivation we skip utilizing indices to emphasize that we are discussing the first iteration. In tables we added them as superscripts with a rather obvious meaning. Throughout the paper we will try to maintain the same approach and when not necessary we may actually skip adding iteration or other indices.
\begin{table}[h]
\caption{\bl{\textbf{Simulated}} values for $p_{err}^{(1)}  $, $\hat{s}_{1}^{(1)}$, $\|\x^{(1,s)}\|_2^2$, and $(\x_{sol})^T\x^{(1,s)}$; $\alpha=0.8$, $r_{sc}=1.3$, $n=\{100,200,400,800,1600\}$}\vspace{.1in}
\hspace{-0in}\centering
\footnotesize{
\begin{tabular}{||c||c||c|c|c|c||}\hline\hline
$ n $ & $\#$ of reps.   & $p_{err}^{(1)}  $ & $\hat{s}_{1}^{(1)}$ & $\|\x^{(1,s)}\|_2^2  $ & $(\x_{sol})^T\x^{(1,s)}  $  \\ \hline\hline
$100  $ & $165  $ & $\bl{\mathbf{0.0485}}  $ & $\bl{\mathbf{0.1230}}  $ & $\bl{\mathbf{0.7097}}  $ & $\bl{\mathbf{0.7640}}  $ \\ \hline
$200  $ & $954  $ & $\bl{\mathbf{0.0495}}  $ & $\bl{\mathbf{0.1314}}  $ & $\bl{\mathbf{0.7046}}  $ & $\bl{\mathbf{0.7601}}  $ \\ \hline
$400  $ & $600  $ & $\bl{\mathbf{0.0483}}  $ & $\bl{\mathbf{0.1316}}  $ & $\bl{\mathbf{0.7005}}  $ & $\bl{\mathbf{0.7596}}  $ \\ \hline
$800  $ & $465  $ & $\bl{\mathbf{0.0447}}  $ & $\bl{\mathbf{0.1294}}  $ & $\bl{\mathbf{0.7046}}  $ & $\bl{\mathbf{0.7658}}  $ \\ \hline
$1600  $ & $170  $ & $\bl{\mathbf{0.0454}}  $ & $\bl{\mathbf{0.1308}}  $ & $\bl{\mathbf{0.7028}}  $ & $\bl{\mathbf{0.7642}}  $ \\ \hline\hline
$\infty$ -- \textbf{theory} & $-$ & $\mathbf{0.0456}$  & $\mathbf{0.1314}$ & $\mathbf{0.7009}$ & $\mathbf{0.7628}$ \\ \hline\hline
\end{tabular}}
\label{tab:tabavoidnum3}
\end{table}

\section{CLuP -- second iteration performance analysis}
\label{sec:clupsecondit}

The above analysis is of course related to the first iteration of the algorithm. In this section we discuss the second iteration. This is the major, key step in the discussion of the complexity of the entire algorithm and all other steps are basically just a simple generalization of the foundational concepts needed for this second step. Some of the considerations that we present below will be similar to those presented when we discussed the first iteration. On the other hand some of them will be very different. To ensure the easiness of the exposition we will try to rely on what we presented above as much as possible. However, we will also try to avoid repetitive reexplaining of the already introduced concepts and, as usual, will try to put the emphasis on the key differences.

We start by recalling that the CLuP's second iteration amounts to determining $\x^{(2)}$ as
\begin{eqnarray}
\x^{(2)}=\frac{\x^{(2,s)}}{\|\x^{(2,s)}\|_2} \quad \mbox{with}\quad \x^{(2,s)}=\mbox{arg}\min_{\x} & & -(\x^{(1)})^T\x  \nonumber \\
\mbox{subject to} & & \|\y-A\x\|_2\leq r\nonumber \\
&& \x\in \left [-\frac{1}{\sqrt{n}},\frac{1}{\sqrt{n}}\right ]^n. \label{eq:avoidsecclup1}
\end{eqnarray}
We then closely follow what was done earlier and first rewrite (\ref{eq:avoidsecclup1}) as
\begin{eqnarray}
\min_{\z} & & -(\x^{(1)})^T(\x_{sol}-\z)  \nonumber \\
\mbox{subject to} & & \|\sigma\v+A\z\|_2\leq r\nonumber \\
&& \z\in \left [0,2/\sqrt{n}\right ]^n, \label{eq:avoidsecclup1a1}
\end{eqnarray}
and further as
\begin{eqnarray}
\min_{\z} & & (\x^{(1)})^T\z  \nonumber \\
\mbox{subject to} & & \|\sigma\v+A\z\|_2\leq r\nonumber \\
&& \z\in \left [0,2/\sqrt{n}\right ]^n. \label{eq:avoidsecclup1a2}
\end{eqnarray}
As in the previous section, we will again rely on the same concentration strategy  introduced in \cite{StojnicCSetam09,StojnicCSetamBlock09,StojnicISIT2010binary,StojnicDiscPercp13,StojnicUpper10,StojnicGenLasso10,StojnicGenSocp10,StojnicPrDepSocp10,StojnicRegRndDlt10,Stojnicbinary16fin,Stojnicbinary16asym} and considered in \cite{Stojnicclupint19}, and set $\|\z\|_2^2=c_{2,z}$ and $(\x^{(1,s)})^T\z=s_2$. Analogously to (\ref{eq:avoidclup1a3}) one can then view the following optimization problem as the object of interest
\begin{eqnarray}
\xi_{p,2}(\alpha,\sigma,c_{2,z},s_2)=\lim_{n\rightarrow\infty}\frac{1}{\sqrt{n}}\mE\min_{\z} & & \|\sigma\v+A\z\|_2  \nonumber \\
\mbox{subject to} & & \|\z\|_2^2=c_{2,z}\nonumber \\
& & (\x^{(1,s)})^T\z=s_2 \nonumber \\
&& \z\in \left [0,2/\sqrt{n}\right ]^n. \label{eq:avoidsecclup1a3}
\end{eqnarray}
This problem of course structurally completely matches the one in (\ref{eq:avoidclup1a3}). So, conceivably, one can then proceed with an analysis similar to the one presented in the previous section right after (\ref{eq:avoidclup1a3}). That is as an approximation possible but is likely to lead to inaccurate estimates already on the level of the second iteration. Those potential inaccuracies would have a chance to be even more pronounced after propagations through later iterations (plus one would have to likely face structurally similar problems in later iterations as well and if similar approximations are to be made again they might introduce another set of inaccuracies on their own that could also become more pronounced after going through the iterations that would follow). So, where is actually the core of the problem? Namely, while the problems in (\ref{eq:avoidclup1a3}) and (\ref{eq:avoidsecclup1a3}) do look almost identical, they also have one key difference. Instead of $\x^{(0)}$ that one has in (\ref{eq:avoidclup1a3}), in (\ref{eq:avoidsecclup1a3}) one has $\x^{(1)}$. If $\x^{(1)}$ were a constant or randomly generated the approach from the previous section could be used; it is just that it would have to be slightly adjusted. However, the problem one faces here is much bigger. It is not just that $\x^{(1)}$ is different because of the way how it is generated, its randomness actually depends on the problem structure and can not in principle be separated from it as it was in the previous section for $\x^{(0)}$.

We will eventually provide a way to estimate $\xi_{p,2}(\alpha,\sigma,c_{2,z},s_2)$. However, a lot of work will be needed before reaching the point to be able to do that. So, instead of jumping directly to (\ref{eq:avoidsecclup1a3}) one can rewrite (\ref{eq:avoidsecclup1a2}) in the following way
\begin{eqnarray}
\min_{\z,\z^{(1)}} & & (\x^{(1)})^T\z  \nonumber \\
\mbox{subject to} & & \|\sigma\v+A\z\|_2\leq r\nonumber \\
&& \z\in \left [0,2/\sqrt{n}\right ]^n \nonumber \\
& & \|\sigma\v+A\z^{(1)}\|_2\leq r\nonumber \\
 & & \|\z^{(1)}\|_2^2=\hat{c}_{1,z}^{(1)}\nonumber \\
& & (\x^{(0)})^T\z^{(1)}=\hat{s}_1^{(1)} \nonumber \\
&& \z^{(1)}\in \left [0,2/\sqrt{n}\right ]^n, \label{eq:avoidsecclup1a2a}
\end{eqnarray}
where the bottom portion of the constraints is pretty much artificially added for the completeness. From the analysis in the previous section it is clear that there is not really much of freedom to optimize over $\z^{(1)}$. One can now proceed with the standard RDT steps. As usual, the first one amounts to forming the deterministic Lagrange dual.

\vspace{.1in}
\noindent \xmyboxc{\bl{\emph{\textbf{1. First step -- \dgr{Forming the deterministic Lagrange dual} }}}}

\vspace{.1in}
Also as usual, we once again follow a large body of our earlier work and obtain (see, e.g. \cite{StojnicCSetam09,StojnicISIT2010binary,StojnicDiscPercp13,StojnicGenLasso10,StojnicGenSocp10,StojnicPrDepSocp10,StojnicRegRndDlt10})
\begin{eqnarray}
\min_{\z,\z^{(1)}}\max_{\gamma,\gamma_0\geq 0} & & (\x^{(1)})^T\z+\gamma(\|\sigma\v+A\z\|_2- r)+\gamma_{0}(\|\sigma\v+A\z^{(1)}\|_2- r) \nonumber \\
\mbox{subject to} && \z\in \left [0,2/\sqrt{n}\right ]^n \nonumber \\
 & & \|\z^{(1)}\|_2^2=\hat{c}_{1,z}^{(1)}\nonumber \\
& & (\x^{(0)})^T\z^{(1)}=\hat{s}_1^{(1)} \nonumber \\
&& \z^{(1)}\in \left [0,2/\sqrt{n}\right ]^n, \label{eq:avoidsecclup1a2a1}
\end{eqnarray}
After a couple of cosmetic changes one also has
\begin{eqnarray}
\min_{\z,\z^{(1)}} \max_{\gamma,\gamma_0\geq 0}\max_{\|\lambda\|_2=1,\|\lambda_0\|_2=1}& & (\x^{(1)})^T\z+\gamma\z_{sc}\lambda^T\begin{bmatrix}A &\v\end{bmatrix}\begin{bmatrix}\z\\\sigma\end{bmatrix}/\z_{sc}- \gamma r+\gamma_0\z_{sc}^{(1)}\lambda_0^T\begin{bmatrix}A &\v\end{bmatrix}\begin{bmatrix}\z^{(1)}\\\sigma\end{bmatrix}/\z_{sc}^{(1)}- \gamma_0 r \nonumber \\
\mbox{subject to} && \z\in \left [0,2/\sqrt{n}\right ]^n, \z_{sc}=\sqrt{\|\z\|_2^2+\sigma^2} \nonumber \\
 & & \|\z^{(1)}\|_2^2=\hat{c}_{1,z}^{(1)}\nonumber \\
& & (\x^{(0)})^T\z^{(1)}=\hat{s}_1^{(1)} \nonumber \\
&& \z^{(1)}\in \left [0,2/\sqrt{n}\right ]^n, \z_{sc}^{(1)}=\sqrt{\|\z^{(1)}\|_2^2+\sigma^2}.\label{eq:avoidsecclup1a2a2}
\end{eqnarray}
Relying on the concentration of $\gamma$ and $\gamma_0$ one further has
\begin{eqnarray}
\max_{\gamma,\gamma_0\geq 0}\min_{\z,\z^{(1)}} \max_{\|\lambda\|_2=1,\|\lambda_0\|_2=1}& & (\x^{(1)})^T\z+\gamma\z_{sc}\lambda^T\begin{bmatrix}A &\v\end{bmatrix}\begin{bmatrix}\z\\\sigma\end{bmatrix}/\z_{sc}- \gamma r+\gamma_0\z_{sc}^{(1)}\lambda_0^T\begin{bmatrix}A &\v\end{bmatrix}\begin{bmatrix}\z^{(1)}\\\sigma\end{bmatrix}/\z_{sc}^{(1)}- \gamma_0 r \nonumber \\
\mbox{subject to} && \z\in \left [0,2/\sqrt{n}\right ]^n, \z_{sc}=\sqrt{\|\z\|_2^2+\sigma^2} \nonumber \\
 & & \|\z^{(1)}\|_2^2=\hat{c}_{1,z}^{(1)}\nonumber \\
& & (\x^{(0)})^T\z^{(1)}=\hat{s}_1^{(1)} \nonumber \\
&& \z^{(1)}\in \left [0,2/\sqrt{n}\right ]^n, \z_{sc}^{(1)}=\sqrt{\|\z^{(1)}\|_2^2+\sigma^2}.\label{eq:avoidsecclup1a2a3}
\end{eqnarray}

\vspace{.1in}
\noindent \xmyboxc{\bl{\emph{\textbf{2. Second step -- \dgr{Forming the Random dual} }}}}

\vspace{.1in}
Continuing to follow further the analysis presented in \cite{Stojnicclupint19} and the principles introduced earlier in e.g. \cite{StojnicCSetam09,StojnicISIT2010binary,StojnicDiscPercp13,StojnicGenLasso10,StojnicGenSocp10,StojnicPrDepSocp10,StojnicRegRndDlt10})
we can again introduce the so-called \bl{\textbf{random dual}}. However, things are much more subtle this time. Let $\bar{{\cal Z}}=\left [0,2/\sqrt{n}\right ]^n$. One considers then the following object
\begin{equation}\label{eq:avoidsecclup1a2a4}
  f_{RD}= (\x^{(1)})^T\z+ \gamma \z_{sc} f_{RD,2}- \gamma r+\gamma_0 \z_{sc}^{(1)} f_{RD,1}- \gamma_0 r,\nonumber \\
\end{equation}
where
\begin{eqnarray}
f_{RD,2} & = & \lambda^T(q^{(1)}\g+\sqrt{1-(q^{(1)})^2}\g^{(1)})+((p^{(1)}\h+\sqrt{1-(p^{(1)})^2}\h^{(1)})\z/\z_{sc}\nonumber\\
& & +(p^{(1)}h_0+\sqrt{1-(p^{(1)})^2}h_0^{(1)})\sigma/\z_{sc}), \label{eq:avoidsecclup1a2a5}
\end{eqnarray}
and
\begin{eqnarray}
f_{RD,1} & = & \lambda_0^T\g+\h^T\z^{(1)}/\z_{sc}^{(1)} +h_0\sigma/\z_{sc}^{(1)}, \label{eq:avoidsecclup1a2a6}
\end{eqnarray}
and as usual, the components of all $\g$, $\h$, $\g^{(1)}$, and $\h^{(1)}$ and $h_0^{(1)}$ are i.i.d. standard normals. It is not that hard to guess that the corresponding \bl{\textbf{random dual}} is then
\begin{eqnarray}
\min_{q^{(1)}}\max_{p^{(1)}}\max_{\gamma,\gamma_0\geq 0}\min_{\z,\z^{(1)}} \max_{\|\lambda\|_2=1,\|\lambda_0\|_2=1}& & f_{RD} \nonumber \\
\mbox{subject to} && \z\in \left [0,2/\sqrt{n}\right ]^n, \z_{sc}=\sqrt{\|\z\|_2^2+\sigma^2} \nonumber \\
 & & \|\z^{(1)}\|_2^2=\hat{c}_{1,z}^{(1)}\nonumber \\
& & (\x^{(0)})^T\z^{(1)}=\hat{s}_1^{(1)} \nonumber \\
&& \z^{(1)}\in \left [0,2/\sqrt{n}\right ]^n, \z_{sc}^{(1)}=\sqrt{\|\z^{(1)}\|_2^2+\sigma^2}\nonumber \\
& &  \begin{bmatrix}\z\\\sigma\end{bmatrix}^T\begin{bmatrix}\z^{(1)}\\\sigma\end{bmatrix}/\z_{sc}/\z_{sc}^{(1)}= q^{(1)}\nonumber \\
& & \lambda^T\lambda_0=p^{(1)}.\label{eq:avoidsecclup1a2a7}
\end{eqnarray}
There are a couple of things we should now add. If one looks solely at $f_{RD,1}$ that does seem perfectly fine on its own. It in fact is exactly the portion of the random dual that corresponds to the $\z^{(1)}$ portion of the objective in (\ref{eq:avoidsecclup1a2a3}) (an easy comparison with what was done in the previous section would quickly confirm that). Analogously, one then expects that similar object should be the portion of the random dual that corresponds to the $\z$ portion of the objective in (\ref{eq:avoidsecclup1a2a3}). That is of course exactly $f_{RD,2}$. Everything would be rather smooth if there were no $p^{(1)}$, $q^{(1)}$. The question is of course where these come from. That would be obviously very hard to understand right now and even way harder to explain without going into heavy machinery of the underlying fundamentals of the random duality theory itself. As such a discussion would overtake over the main topic of this paper, which is the complexity analysis of the CLuP algorithm, we leave it for a separate paper where we will revisit some of the random duality theory fundamentals. Here though, we just briefly mention that the meaning of the $p^{(1)}$, $q^{(1)}$ parameters is the following: $q^{(1)}$ is roughly speaking the presumed concentrating value of the the so-called optimal achieving $\begin{bmatrix}\z\\\sigma\end{bmatrix}$-cross-overlap, i.e.
\begin{equation}\label{eq:avoidsecclup1a2a8}
  q^{(1)}\approx\begin{bmatrix}\z\\\sigma\end{bmatrix}^T\begin{bmatrix}\z^{(1)}\\\sigma\end{bmatrix}/\z_{sc}/\z_{sc}^{(1)},
\end{equation}
and $p^{(1)}$ is roughly speaking the presumed concentrating value of the the so-called optimal achieving $\lambda$-cross-overlap, i.e.
\begin{equation}\label{eq:avoidsecclup1a2a9}
  p^{(1)}\approx\lambda^T\lambda_0.
\end{equation}
Of course, as one may guess, things are actually way more complicated since the above mentioned concentrating values are not just over the standard randomness but also over certain the so-called Gibbsian measures randomness as well (both $\z$'s and both $\lambda$'s above in such situations are running over all allowed $\z$'s and $\lambda$'s). Those types of measures and their randomness appear within the random duality theory as some of the most crucial objects and are way harder mathematical concepts than the regular Gaussian ones discussed above. As mentioned above, to avoid being sidetracked with all these mathematical complications we leave more detailed discussions in these directions for separate companion papers.

One can then proceed with handling (\ref{eq:avoidsecclup1a2a7}). That is in principle simple if one follows what we did in the previous section and earlier in \cite{Stojnicclupint19} and ultimately in e.g. \cite{StojnicCSetam09,StojnicISIT2010binary,StojnicDiscPercp13,StojnicGenLasso10,StojnicGenSocp10,StojnicPrDepSocp10,StojnicRegRndDlt10}). However, just looking at the problem one quickly observes that there are quite a few variables to optimize over. So, instead of directly working with (\ref{eq:avoidsecclup1a2a7}) we will in a way emulate what we did in \cite{Stojnicclupint19} and try to work through a few shortcuts. In equation (8) of \cite{Stojnicclupint19} we essentially had the type of problem that we have here in (\ref{eq:avoidsecclup1a2a7}). Instead of solving it directly we in \cite{Stojnicclupint19} created a shortcut mechanism starting from equation (9) and continuing further through Section 2.1.1 (later on, in Section 3.2.2, we revisited the problem from equation (8) and solved it directly as well). Here we will try to mimic the idea from equation (9) and Section 2.1.1 of \cite{Stojnicclupint19}. We do mention though that while in \cite{Stojnicclupint19} this turned out to be the best mechanism here one may be able to find even better ones. However, as we will see later on even the mechanism that we present below works very well. There will be the two main steps in the analysis process.

\vspace{.1in}
\noindent \xmyboxc{\bl{\emph{\textbf{3. Third step -- \dgr{Rehandling the Random dual of the first iteration} }}}}

To start things off we will first separately handle the following problem
\begin{eqnarray}
\min_{\z^{(1)}} \max_{\|\lambda_0\|_2=1}& & \z_{sc}^{(1)}f_{RD,1} \nonumber \\
\mbox{subject to}  & & \|\z^{(1)}\|_2^2=\hat{c}_{1,z}^{(1)}\nonumber \\
& & (\x^{(0)})^T\z^{(1)}=\hat{s}_1^{(1)} \nonumber \\
&& \z^{(1)}\in \left [0,2/\sqrt{n}\right ]^n, \z_{sc}^{(1)}=\sqrt{\|\z^{(1)}\|_2^2+\sigma^2},\label{eq:avoidsecclup1a2a10}
\end{eqnarray}
and utilize the obtained $\z_{sc}^{(1)}$. Not only does this emulate what we did in \cite{Stojnicclupint19}, it in a way also emulates the natural flow of the CLuP algorithm. It is now beyond trivial to recognize that the solution of (\ref{eq:avoidsecclup1a2a10}) is exactly what we obtained in the previous section. To be more precise, from (\ref{eq:avoidclup14a}) one actually has
\begin{equation}
\z_i^{(1)}=\frac{1}{\sqrt{n}}\min \left (\max\left (0,-\left (\frac{\h_i+\nu\x^{(0)}_i}{2\gamma}\right )\right ),2\right ),\label{eq:avoidsecclup1a2a11}
\end{equation}
and
\begin{equation}
\x_i^{(1,s)}=1-\z_i^{(1)}=\frac{1}{\sqrt{n}}\left (1-\min \left (\max\left (0,-\left (\frac{\h+\hat{\nu}^{(1)}\x^{(0)}_i}{2\hat{\gamma}^{(1)}}\right )\right ),2\right )\right ),\label{eq:avoidsecclup1a2a12}
\end{equation}
where from this point on to lighten writing we actually instead of $\x_{sol}$ assume its value of all ones scaled by $\sqrt{n}$.

\vspace{.2in}
\noindent \xmyboxc{\bl{\emph{\textbf{4. Fourth step -- \dgr{Handling the real Random dual of the second iteration} }}}}

Finally we are in position to complete the last piece of magic. That amounts to solving the following problem
\begin{eqnarray}
\min_{\z} \max_{\|\lambda\|_2=1}& & \z_{sc}f_{RD,2} \nonumber \\
\mbox{subject to}  & & \z\in \left [0,2/\sqrt{n}\right ]^n, \z_{sc}=\sqrt{\|\z\|_2^2+\sigma^2}.\label{eq:avoidsecclup1a2a13}
\end{eqnarray}
Relying again on the concentration concept discussed on many occasions in this and the previous section (as well as in \cite{Stojnicclupint19}) and earlier in \cite{StojnicCSetam09,StojnicISIT2010binary,StojnicDiscPercp13,StojnicGenLasso10,StojnicGenSocp10,StojnicPrDepSocp10,StojnicRegRndDlt10}) and connecting to (\ref{eq:avoidsecclup1a3})
\begin{eqnarray}
\min_{\z} \max_{\|\lambda\|_2=1}& & \z_{sc}f_{RD,2} \nonumber \\
\mbox{subject to}  & & \z\in \left [0,2/\sqrt{n}\right ]^n, \z_{sc}=\sqrt{\|\z\|_2^2+\sigma^2}\nonumber \\
& & \|\z\|_2^2=c_{2,z}\nonumber \\
& & (\x^{(1,s)})^T\z=s_2.\label{eq:avoidsecclup1a2a14}
\end{eqnarray}
The above is in principle the core of the mechanism. However, to ensure that we can track the behavior of all critical parameters we will also add another concentrating constraint $\x_{sol}^T\z=\1^T\z/\sqrt{n}=s_3$ ($\1$ is of course the $n$-dimensional column vector of all ones) so that we actually have
\begin{eqnarray}
\min_{\z} \max_{\|\lambda\|_2=1}& & \z_{sc}f_{RD,2} \nonumber \\
\mbox{subject to}  & & \z\in \left [0,2/\sqrt{n}\right ]^n, \z_{sc}=\sqrt{\|\z\|_2^2+\sigma^2}\nonumber \\
& & \|\z\|_2^2=c_{2,z}\nonumber \\
& & (\x^{(1,s)})^T\z=s_2 \nonumber \\
& & \frac{1}{\sqrt{n}}\1^T\z=s_3 \nonumber \\
& & \lambda^T\hat{\lambda}_0=p^{(1)},\label{eq:avoidsecclup1a2a15}
\end{eqnarray}
where $\hat{\lambda}_0=\g/\|\g\|_2$ is obtained trivially from (\ref{eq:avoidsecclup1a2a6}) and (\ref{eq:avoidsecclup1a2a10}). One should also keep in mind that
\begin{equation}\label{eq:avoidsecclup1a2a16}
  q^{(1)}=\begin{bmatrix}\z\\\sigma\end{bmatrix}^T\begin{bmatrix}\z^{(1)}\\\sigma\end{bmatrix}/\z_{sc}/\z_{sc}^{(1)}
  =\frac{s_3-s_2+\sigma^2}{\sqrt{c_{2,z}+\sigma^2}\sqrt{\hat{c}_{1,z}+\sigma^2}}.
\end{equation}
Finally, after plugging back the value for $f_{RD,2}$ from (\ref{eq:avoidsecclup1a2a5}) one has
\begin{eqnarray}
\min_{\z} \max_{\|\lambda\|_2=1}& & \z_{sc}\lambda^T(q^{(1)}\g+\sqrt{1-(q^{(1)})^2}\g^{(1)})+(p^{(1)}\h+\sqrt{1-(p^{(1)})^2}\h^{(1)})\z  \nonumber \\
\mbox{subject to}  & & \z\in \left [0,2/\sqrt{n}\right ]^n, \z_{sc}=\sqrt{\|\z\|_2^2+\sigma^2}\nonumber \\
& & \|\z\|_2^2=c_{2,z}\nonumber \\
& & (\x^{(1,s)})^T\z=s_2 \nonumber \\
& & \1^T\z=\sqrt{n}s_3 \nonumber \\
& & \lambda^T\g=\|\g\|_2p^{(1)},\label{eq:avoidsecclup1a2a17}
\end{eqnarray}
where the last term in $f_{RD,2}$ in (\ref{eq:avoidsecclup1a2a5}), $(p^{(1)}h_0+\sqrt{1-(p^{(1)})^2}h_0^{(1)})\sigma$, is neglected. Choosing
\begin{equation}\label{eq:avoidsecclup1a2a18}
  \hat{\lambda}=\frac{p^{(1)}\g+\sqrt{1-(p^{(1)})^2}\g^{(1)}}{\|p^{(1)}\g+\sqrt{1-(p^{(1)})^2}\g^{(1)}\|_2},
\end{equation}
and averaging over $\g$ and $\g^{(1)}$ we from (\ref{eq:avoidsecclup1a2a17}) then obtain
\begin{eqnarray}
\min_{\z} & & \sqrt{\alpha n}\sqrt{\|\z\|_2^2+\sigma^2}\left (q^{(1)}p^{(1)}+\sqrt{1-(q^{(1)})^2}\sqrt{1-(p^{(1)})^2}\right )+(p^{(1)}\h+\sqrt{1-(p^{(1)})^2}\h^{(1)})\z  \nonumber \\
\mbox{subject to}  & & \z\in \left [0,2/\sqrt{n}\right ]^n \nonumber \\
& & \|\z\|_2^2=c_{2,z}\nonumber \\
& & (\x^{(1,s)})^T\z=s_2 \nonumber \\
& & \1^T\z=\sqrt{n}s_3.\label{eq:avoidsecclup1a2a19}
\end{eqnarray}
After writing the Lagrange dual one obtains a problem very similar to (\ref{eq:avoidclup11})
\begin{eqnarray}
\max_{\gamma,\nu,\nu_2}\min_{\z} & & {\cal L}(\gamma,\nu,\nu_2)\nonumber \\
\mbox{subject to}  & & \z\in \left [0,2/\sqrt{n}\right ]^n,\label{eq:avoidsecclup11}
\end{eqnarray}
where
\begin{eqnarray}\label{eq:avoidsecclup11a}
{\cal L}(\gamma,\nu,\nu_2) & = & \sqrt{\alpha n}\sqrt{\|\z\|_2^2+\sigma^2}\left (q^{(1)}p^{(1)}+\sqrt{1-(q^{(1)})^2}\sqrt{1-(p^{(1)})^2}\right )\nonumber \\
& & +
\h^{(1,p)}\z +\gamma (\|\z\|_2^2-c_{2,z})+\nu ((\x^{(1,s)})^T\z-s_2+\nu_2(\1^T\z-\sqrt{n}s_3)),
\end{eqnarray}
with $\h^{(1,p)}=(p^{(1)}\h+\sqrt{1-(p^{(1)})^2}\h^{(1)})$. Similarly to (\ref{eq:avoidclup11}), we refer to the expected value of the $\sqrt{n}$ scaled version of the above objective as $\xi_{RD}^{(2)}(\alpha,\sigma;p^{(1)},q^{(1)},c_{2,z},s_2,s_3,\gamma,\nu,\nu_2)$. Then analogously to (\ref{eq:avoidclup12}) (and following into the footsteps of say \cite{StojnicDiscPercp13}) we define
\begin{eqnarray}
f_{box,2}(\h^{(1,p)};c_{2,z},s_2,s_3) = \max_{\gamma,\nu,\nu_2}\min_{\z}&& \h^{(1,p)}\z +\gamma (\|\z\|_2^2-c_{2,z})+\nu ((\x^{(1,s)})^T\z-s_2+\nu_2(\frac{1}{\sqrt{n}}\1^T\z-s_3))\nonumber \\
\mbox{subject to} & & \z\in \left [0,2/\sqrt{n}\right ]^n.\label{eq:avoidseccclup12}
\end{eqnarray}
The only thing that is different now compared to (\ref{eq:avoidclup12}) is that we now have an extra constraint related to $s_3$. This though changes nothing with respect to the methodology that we applied after (\ref{eq:avoidclup12}) and we could utilize the solution obtained there with a slight modification to account for $s_3$ and $\nu_2$. That essentially means that instead of (\ref{eq:avoidclup13}) (and earlier (110) from \cite{StojnicDiscPercp13}) one now has
\begin{eqnarray}
f_{box,2}(\h^{(1,p)};c_{2,z},s_2,s_3)  =  \max_{\gamma,\nu} & & \frac{1}{\sqrt{n}}\left (\sum_{i=1}^{n}f_{box,2}^{(1)}(\h_i^{(1,p)},\gamma,\nu)\right )-\nu s_2\sqrt{n}-\nu s_3\sqrt{n}-\gamma c_{2,z}\sqrt{n},\label{eq:avoidsecclup13}
\end{eqnarray}
where
\begin{equation}
f_{box,2}^{(1)}(\h_i^{(1,p)},\gamma,\nu,\nu_2)=\begin{cases}0, & \h_i^{(1,p)}+\nu\x^{(0)}_i\geq 0\\
-\frac{(\h_i^{(1,p)}+\nu\x^{(1,s)}_i+\nu_2)^2}{4\gamma}, & -4\gamma\leq \h_i^{(1,p)}+\nu\x^{(1,s)}_i+\nu_2\leq 0\\
2(\h_i^{(1,p)}+\nu\x^{(1,s)}_i+\nu_2)+4\gamma, & \h_i^{(1,p)}+\nu\x^{(1,s)}_i+\nu_2\leq -4\gamma,
\end{cases}\label{eq:avoidsecclup14}
\end{equation}
with the same scaling discussion regarding $\gamma$, $\nu$, $\nu_2$, and $\x^{(0)}_i$ as after (\ref{eq:avoidclup14}). Following further what was done in the previous section (and earlier outlined in \cite{Stojnicclupint19}) one can also determine the optimizing $\z_i$ and $\x_i^{(2,s)}$
\begin{eqnarray}
\z_i^{(2)} & = & \frac{1}{\sqrt{n}}\min \left (\max\left (0,-\left (\frac{\h_i^{(1,p)}+\nu\x^{(1,s)}_i+\nu_2}{2\gamma}\right )\right ),2\right )\nonumber \\
\x_i^{(2,s)} & = & \frac{1}{\sqrt{n}}-\z_i^{(2)}=\frac{1}{\sqrt{n}}\left (1-\min \left (\max\left (0,-\left (\frac{\h_i^{(1,p)}+\nu\x^{(1,s)}_i+\nu_2}{2\gamma}\right )\right ),2\right )\right ),\label{eq:avoidsecclup14a}
\end{eqnarray}
where we also recall from (\ref{eq:avoidsecclup1a2a12})
\begin{equation}
\x_i^{(1,s)}=1-\z_i^{(1)}=\frac{1}{\sqrt{n}}\left (1-\min \left (\max\left (0,-\left (\frac{\h+\hat{\nu}^{(1)}\x^{(0)}_i}{2\hat{\gamma}^{(1)}}\right )\right ),2\right )\right ).\label{eq:avoidsecclup15a}
\end{equation}
As earlier, if we assume that the initial $\x^{(0)}$ has $\rho n$ components equal to $\frac{1}{\sqrt{n}}$ and $(1-\rho) n$ components equal to $-\frac{1}{\sqrt{n}}$ we have for the objective
\begin{equation}
\mE f_{box,2}^{(1)}(\h_i,\h_i^{(1)},\gamma,\nu,\nu_2)=\rho I_{1}^{(2)}(\gamma,\nu,\nu_2,\hat{\nu}^{(1)},\hat{\gamma}^{(1)})+(1-\rho) I_{1}^{(2)}(\gamma,\nu,\nu_2,-\hat{\nu}^{(1)},\hat{\gamma}^{(1)}),\label{eq:avoidsecclup15}
\end{equation}
where
\begin{equation}
I_{1}^{(2)}(\gamma,\nu,\nu_2,\hat{\nu}^{(1)},\hat{\gamma}^{(1)})  =  \int\int((\h_i^{(1,p)}+\nu\x^{(1,s)}+\nu_2)\z_i^{(2)}+\gamma\left (\z_i^{(2)}\right )^2)exp\left (-\frac{\left (\h_i^{(1)}\right )^2+ \h_i^2}{2}\right )\frac{d\h_i^{(1)}d\h_i}{2\pi},
\label{eq:avoidsecclup16}
\end{equation}
and for $\gamma<0$ the term under the integral is replaced by zero if negative. Combining all of the above finally gives
\begin{eqnarray}
\xi_{RD}^{(2)}(\alpha,\sigma;p^{(1)},q^{(1)},c_{2,z},s_2,s_3,\gamma,\nu,\nu_2) & = & \sqrt{\alpha}\sqrt{c_{2,z}+\sigma^2}\left (q^{(1)}p^{(1)}+\sqrt{1-(q^{(1)})^2}\sqrt{1-(p^{(1)})^2}\right )\nonumber \\
& & +\mE f_{box,2}^{(1)}(\h_i,\h_i^{(1)},\gamma,\nu,\nu_2)-\nu s_2-\nu_2 s_3-\gamma c_{2,z}. \label{eq:avoidsecclup17}
\end{eqnarray}
Now we are in position to give a sort of a brief summary of the entire formalism that we presented above. It is essentially analogous to what we discussed after Theorem \ref{thm:avoidcluprd1}. While we will try to emulate all the ideas from the previous section, there are quite a few new elements here that need to be incorporated and we will try to emphasize all of that in the formalism below.

\subsection{CLuP -- summary of the second iteration performance analysis}
\label{sec:clupseconditsummary}

Similarly to what we did in Section \ref{sec:clupfirstitsummary}, we below present critical steps needed to actually calculate all the quantities of interest. Of course, the above analysis is at the core of all the underlying mechanisms that basically enable performing these steps.

\vspace{.1in}
\noindent \xmyboxc{\bl{\emph{\textbf{Summarized formalism to handle the CLuP's second iteration }}}}

Differently from Section \ref{sec:clupfirstitsummary}, here we will split the presentation into several separate parts.

\vspace{.1in}
\noindent \xmyboxc{\bl{\emph{\textbf{I) First part -- \dgr{Handling the first iteration} }}}}

The first part of the formalism essentially reflects on the above analysis by recognizing that what it effectively accomplished was rehandling the CLuP's first iteration. That basically means that one first solves the following problem
\begin{eqnarray}
\phi_a^{(1)}=\{\hat{\nu}^{(1)},\hat{\gamma}^{(1)},\hat{c}_{1,z}^{(1)},\hat{s}_1^{(1)}\}=\mbox{arg} \min_{s_1} & & s_1 \nonumber \\
\mbox{subject to} & & \min_{0\leq c_{1,z}\leq 4}\max_{\gamma,\nu}\xi_{RD}^{(1)}(\alpha,\sigma;c_{1,z},s_1,\gamma,\nu)= r.\label{eq:sumavoidclup17a}
\end{eqnarray}
to obtain set of parameters $\{\hat{\nu}^{(1)},\hat{\gamma}^{(1)},\hat{c}_{1,z}^{(1)},\hat{s}_1^{(1)}\}$ that enter the second iteration. Recalling on (\ref{eq:avoidclup18})
\begin{eqnarray}
s_{x,1}(\gamma,\nu) & = & -\nu/2/\gamma(.5\erfc(\nu/\sqrt{2})-.5\erfc((\nu+4\gamma)/\sqrt{2}))\nonumber \\
& & +1/2/\gamma/\sqrt{2\pi}(exp(-\nu^2/2)-exp(-(4\gamma+\nu)^2/2))\nonumber \\
s_{xsq,1}(\gamma,\nu) & = & -I_{1,1}(\gamma,\nu)/\gamma\nonumber \\
s_{x,2}(\gamma,\nu) & = & 2(.5\erfc((4\gamma+\nu)/\sqrt{2}))\nonumber \\
s_{xsq,2}(\gamma,\nu) & = & 2s_{x,2}.\label{eq:sumavoidclup18}
\end{eqnarray}
one then has from (\ref{eq:avoidclup21}) the first iteration values for the first \textbf{two critical parameters} related to the propagation of the vector $\x$ through the CLuP algorithm that we particularly keep track of
\begin{eqnarray}
\hat{d}_1^{(1)}\triangleq \mE((\x_{sol})^T\x^{(1,s)}) \hspace{-.08in}& = & \hspace{-.08in}1-(\rho s_{x,1}(\hat{\gamma}^{(1)},\hat{\nu}^{(1)})+(1-\rho)s_{x,1}(\hat{\gamma}^{(1)},-\hat{\nu}^{(1)})\nonumber \\
& & +\rho s_{x,2}(\hat{\gamma}^{(1)},\hat{\nu}^{(1)})+(1-\rho)s_{x,2}(\hat{\gamma}^{(1)},-\hat{\nu}^{(1)}))\nonumber \\
\hat{d}_2^{(1)}\triangleq \mE\|\x^{(1,s)}\|_2^2 \hspace{-.08in} & = & \hspace{-.08in} \rho s_{xsq,1}(\hat{\gamma}^{(1)},\hat{\nu}^{(1)})+(1-\rho)s_{xsq,1}(\hat{\gamma}^{(1)},-\hat{\nu}^{(1)})\nonumber \\
&&+\rho s_{xsq,2}(\hat{\gamma}^{(1)},\hat{\nu}^{(1)})+(1-\rho)s_{xsq,2}(\hat{\gamma}^{(1)},-\hat{\nu}^{(1)})
+2\mE((\x_{sol})^T\x^{(1,s)})-1.\label{eq:sumavoidclup21}
\end{eqnarray}
Moreover, from (\ref{eq:avoidsecclup1a2a11}) and (\ref{eq:avoidsecclup1a2a12}) we also have
\begin{equation}
\z_i^{(1)}=\frac{1}{\sqrt{n}}\min \left (\max\left (0,-\left (\frac{\h_i+\hat{\nu}\x^{(0)}_i}{2\hat{\gamma}}\right )\right ),2\right ),\label{eq:sumavoidsecclup1a2a11}
\end{equation}
and
\begin{equation}
\x_i^{(1,s)}=1-\z_i^{(1)}=\frac{1}{\sqrt{n}}\left (1-\min \left (\max\left (0,-\left (\frac{\h+\hat{\nu}^{(1)}\x^{(0)}_i}{2\hat{\gamma}^{(1)}}\right )\right ),2\right )\right ),\label{eq:sumavoidsecclup1a2a12}
\end{equation}
and from (\ref{eq:avoidclup21a}) the \textbf{third critical parameter} that we keep a track of through iterations, the probability of error
\begin{equation}\label{eq:sumavoidclup21a}
  p_{err}^{(1)}=1-P\left (\z_i^{(1)}\leq \frac{1}{\sqrt{n}}\right )=1-\left (\rho\left (\frac{1}{2}\erfc\left ( \frac{-2\hat{\gamma}^{(1)}-\hat{\nu}^{(1)}}{\sqrt{2}}\right )\right )+(1-\rho)\left ( \frac{1}{2}\erfc\left ( \frac{-2\hat{\gamma}^{(1)}+\hat{\nu}^{(1)}}{\sqrt{2}}\right )\right )\right ).
\end{equation}
The \textbf{fourth critical parameter} is of course the value of the objective and after a cosmetic change it is $\hat{s}^{(1)}\triangleq\hat{s}_1^{(1)}$. One can then basically say that in addition to the solution after the first iteration, $\x^{(1,s)}$, the following set of \bl{critical} plus \prp{auxiliary} parameters is the output of the first iteration:
\begin{equation}\label{eq:sumavoidclup21aa}
  \phi^{(1)}=\{\bl{p_{err}^{(1)},\hat{s}^{(1)},\hat{d}_2^{(1)},\hat{d}_1^{(1)}},\prp{\hat{\nu}^{(1)},\hat{\gamma}^{(1)},\hat{c}_{1,z}^{(1)}}\},
\end{equation}
where for simplicity we also emphasize in wording
\begin{eqnarray}\label{eq:sumavoidclup21aa1}
\bl{p_{err}^{(1)}} & - &  \mbox{probability of error after the first iteration}\nonumber \\
\bl{\hat{s}^{(1)}} & = &  \mE((\x^{(0)})^T\x^{(1,s)}) - \mbox{objective value after the first iteration} \nonumber \\
\bl{\hat{d}_2^{(1)}} & = & \mE\|\x^{(1,s)}\|_2^2 - \mbox{squared norm after the first iteration} \nonumber \\
\bl{\hat{d}_1^{(1)}} & = & \mE\x_{sol}^T\x^{(1,s)} - \mbox{inner product with $\x_{sol}$ after the first iteration}.\end{eqnarray}
It is probably not necessary to reemphasize but for the completeness we add that the last three quantities are viewed as expected values and since we are interested in large dimensional scenarios they are due to overwhelming concentrations also the concentrating points.

\vspace{.1in}
\noindent \xmyboxc{\bl{\emph{\textbf{II) Second part -- \dgr{Handling the second iteration} }}}}

Once the first iteration is handled one utilizes its parameters to basically run the second iteration. The strategy is conceptually to a degree similar to what we presented above for the first iteration. One starts with first solving the following optimization problem
\begin{eqnarray}
\phi_a^{(2)}=\mbox{arg} \min_{s_2,s_3} & & \frac{s_2-\hat{d}_1^{(1)}}{\sqrt{\hat{d}_2^{(1)}}}\nonumber \\
\mbox{subject to} & & \min_{q^{(1)}}\max_{p^{(1)}}\min_{0\leq c_{2,z}\leq 4}\max_{\gamma,\nu,\nu_2}\xi_{RD}^{(2)}(\alpha,\sigma;p^{(1)},q^{(1)},c_{2,z},s_2,s_3,\gamma,\nu,\nu_2)=r,\label{eq:sumsecavoidclup17a}
\end{eqnarray}
where
\begin{equation}\label{eq:sumsecavoidclup17aa}
\phi_a^{(2)}=\{\hat{p}^{(1)},\hat{q}^{(1)},\hat{\nu}^{(2)},\hat{\nu}_2^{(2)},\hat{\gamma}^{(2)},\hat{c}_{2,z}^{(2)},\hat{s}_2^{(2)},\hat{s}_3^{(2)}\}.
\end{equation}
With a few exceptions the above seems rather natural extension of (\ref{eq:sumavoidclup17a}). The main changes are the readjustment of the objective and the appearance of $p^{(1)}$ and $q^{(1)}$. Besides this the strategy in (\ref{eq:sumsecavoidclup17a}) essentially remains the same as in (\ref{eq:sumavoidclup17a}). This practically means that one wants to minimize the objective while keeping the optimized $\xi_{RD}^{(2)}(\alpha,\sigma;c_{2,z},s_2,s_3,\gamma,\nu,\nu_2)$ below $r$. The difference is that now the objective is not the $s_1$ as we had before but a rather different object that we explain below.

Now, looking carefully at (\ref{eq:avoidsecclup1})-(\ref{eq:avoidsecclup1a3}) and everything that followed afterwards, one can observe that instead of the real objective $-(\x^{(1)})^T(\x_{sol}-\z)$ its a bit more convenient version $s_2=(\x^{(1,s)})^T\z$ was utilized. To readjust for this we simply note that from (\ref{eq:sumavoidclup21}) one easily has
\begin{equation}\label{{eq:sumsecavoidclup17a1}}
  \frac{-(\x^{(1,s)})^T(\x_{sol}-\z)}{\|\x^{(1,s)}\|_2}=\frac{-\hat{d}_1^{(1)}+s_2}{\sqrt{\hat{d}_2^{(1)}}}.
\end{equation}
To be a bit more in alignment with what was done in the first part (in particular in (\ref{eq:sumavoidclup17a})) one may rewrite (\ref{eq:sumsecavoidclup17a}) in the following way
\begin{eqnarray}
\phi_a^{(2)}=\mbox{arg} \min_{s,s_2,s_3} & & s\nonumber \\
\mbox{subject to} & & \min_{q^{(1)}}\max_{p^{(1)}}\min_{0\leq c_{2,z}\leq 4}\max_{\gamma,\nu,\nu_2}\xi_{RD}^{(2)}(\alpha,\sigma;p^{(1)},q^{(1)},c_{2,z},s_2,s_3,\gamma,\nu,\nu_2)=r\nonumber \\
& & s_2=\hat{d}_1^{(1)}+s\sqrt{\hat{d}_2^{(1)}}.\label{eq:sumsecavoidclup17a2}
\end{eqnarray}
Now, carefully observing further (\ref{eq:sumsecavoidclup17a2}) one can also note that what it basically does is that instead of a parameter $s_2$ (which is natural to the above discussion), it actually reintroduces parameter $s$ (the value of the objective) as a probably more natural object for the following of the algorithm's flow. One can actually continue that way with the second iteration analogues to the other two critical parameters that we mentioned above in the summary of the first iteration's formalism. Namely, if one defines analogously to (\ref{eq:sumavoidclup21})
\begin{eqnarray}
\hat{d}_1^{(2)}& \triangleq &\mE((\x_{sol})^T\x^{(2,s)})=1-\hat{s}_3^{(2)}\nonumber \\
\hat{d}_2^{(2)}& \triangleq & \mE\|\x^{(2,s)}\|_2^2  = \hat{c}_{2,z}^{(2)}
+2\mE((\x_{sol})^T\x^{(2,s)})-1,\label{eq:sumsecavoidclup21}
\end{eqnarray}
then (\ref{eq:sumsecavoidclup17a2}) can be repositioned as
\begin{eqnarray}
\phi_b^{(2)}=\mbox{arg} \min_{s,d_1^{(2)},d_2^{(2)}} & & s\nonumber \\
\mbox{subject to} & & \min_{q^{(1)}}\max_{p^{(1)}}\min_{0\leq c_{2,z}\leq 4}\max_{\gamma,\nu,\nu_2}\xi_{RD}^{(2)}(\alpha,\sigma;p^{(1)},q^{(1)},c_{2,z},s_2,s_3,\gamma,\nu,\nu_2)=r\nonumber \\
& & s_2=d_1^{(1)}+s\sqrt{d_2^{(1)}}\nonumber \\
& & s_3=1-d_1^{(2)}\nonumber \\
& & c_{2,z}=d_2^{(2)}-2d_1^{(2)}+1,\label{eq:sumsecavoidclup17a3}
\end{eqnarray}
where
\begin{equation}\label{eq:sumsecavoidclup17a4}
\phi_b^{(2)}=\{\hat{p}^{(1)},\hat{q}^{(1)},\hat{\nu}^{(2)},\hat{\nu}_2^{(2)},\hat{\gamma}^{(2)},\hat{s}^{(2)},\hat{d}_2^{(2)},\hat{d}_1^{(2)}\}.
\end{equation}
Finally, if one for a moment recalls (\ref{eq:avoidsecclup1a2a16}) then (\ref{eq:sumsecavoidclup17a3}) can also be rewritten as
\begin{eqnarray}
\phi_b^{(2)}=\mbox{arg} \min_{s,d_1^{(2)},d_2^{(2)}} & & s\nonumber \\
\mbox{subject to} & & \max_{p^{(1)}}\min_{0\leq c_{2,z}\leq 4}\max_{\gamma,\nu,\nu_2}\xi_{RD}^{(2)}(\alpha,\sigma;p^{(1)},q^{(1)},c_{2,z},s_2,s_3,\gamma,\nu,\nu_2)=r\nonumber \\
& & s_2=d_1^{(1)}+s\sqrt{d_2^{(1)}}\nonumber \\
& & s_3=1-d_1^{(2)}\nonumber \\
& & c_{2,z}=d_2^{(2)}-2d_1^{(2)}+1\nonumber \\
& & q^{(1)} =\frac{s_3-s_2+\sigma^2}{\sqrt{c_{2,z}+\sigma^2}\sqrt{\hat{c}_{1,z}+\sigma^2}}.\label{eq:sumsecavoidclup17a5}
\end{eqnarray}
We also recall that $\xi_{RD}^{(2)}(\alpha,\sigma;p^{(1)},q^{(1)},c_{2,z},s_2,s_3,\gamma,\nu,\nu_2)$ is given through (\ref{eq:avoidsecclup15a})-(\ref{eq:avoidsecclup17}) and that is where the remaining auxiliary parameters from the first iteration, $\hat{\nu}^{(1)}$ and $\hat{\gamma}^{(1)}$, come into the play as well.

Similarly to (\ref{eq:sumavoidclup21a}) one also has
\begin{equation}\label{eq:sumsecavoidclup17a6}
  p_{err}^{(2)}=1-(\rho p_{cor}(\hat{\nu}^{(1)})+(1-\rho) p_{cor}(-\hat{\nu}^{(1)})),
\end{equation}
where
\begin{equation}\label{eq:sumsecavoidclup17a7}
 p_{cor}(\hat{\nu}^{(1)})=\int\int((\mbox{sign}(\x^{(2,s)})+1)/2)exp\left (-\frac{\left (\h_i^{(1)}\right )^2+ \h_i^2}{2}\right )\frac{d\h_i^{(1)}d\h_i}{2\pi}.
\end{equation}
For the completeness we also add that
\begin{eqnarray}\label{eq:sumsecavoidclup17a8}
 \hat{d}_{2,+}^{(2)}(\hat{\nu}^{(1)}) & = & \int\int((\x_i^{(2,s)})^2exp\left (-\frac{\left (\h_i^{(1)}\right )^2+ \h_i^2}{2}\right )\frac{d\h_i^{(1)}d\h_i}{2\pi}\nonumber \\
  \hat{d}_{1,+}^{(2)}(\hat{\nu}^{(1)}) & = & \int\int((\x_i^{(2,s)})exp\left (-\frac{\left (\h_i^{(1)}\right )^2+ \h_i^2}{2}\right )\frac{d\h_i^{(1)}d\h_i}{2\pi}\nonumber \\
    \hat{s}_{2,+}^{(2)}(\hat{\nu}^{(1)}) & = & \int\int((\x_i^{(1,s)})\z_i^{(2)}exp\left (-\frac{\left (\h_i^{(1)}\right )^2+ \h_i^2}{2}\right )\frac{d\h_i^{(1)}d\h_i}{2\pi},
\end{eqnarray}
and
\begin{eqnarray}\label{eq:sumsecavoidclup17a9}
\hat{d}_{2}^{(2)}  & = & \rho\hat{d}_{2,+}^{(2)}(\hat{\nu}^{(1)})+(1-\rho)\hat{d}_{2,+}^{(2)}(-\hat{\nu}^{(1)}) \nonumber \\
\hat{d}_{1}^{(2)}  & = & \rho\hat{d}_{1,+}^{(2)}(\hat{\nu}^{(1)})+(1-\rho)\hat{d}_{1,+}^{(2)}(-\hat{\nu}^{(1)}) \nonumber \\
\hat{s}_{2}^{(2)}  & = & \rho\hat{s}_{2,+}^{(2)}(\hat{\nu}^{(1)})+(1-\rho)\hat{s}_{2,+}^{(2)}(-\hat{\nu}^{(1)}).
\end{eqnarray}

Finally, similarly to the end of the summary of the first part, here one can also say that in addition to the solution after the second iteration, $\x^{(2,s)}$, the following set of \bl{critical} plus \prp{auxiliary} parameters is the output of the second iteration:
\begin{equation}\label{eq:sumavoidclup21aa}
  \phi^{(2)}=\{\bl{p_{err}^{(2)},\hat{s}^{(2)},\hat{d}_2^{(2)},\hat{d}_1^{(2)}},\prp{\hat{\nu}^{(2)},\hat{\nu}_2^{(2)},\hat{\gamma}^{(2)},\hat{p}^{(1)},\hat{q}^{(1)},\hat{c}_{2,z}^{(1)},\hat{s}_{2}^{(2)},\hat{s}_{3}^{(2)}}\},
\end{equation}
where we again for simplicity use the wording to emphasize
\begin{eqnarray}\label{eq:sumavoidclup21aa1}
\bl{p_{err}^{(2)}} & - &  \mbox{probability of error after the second iteration}\nonumber \\
\bl{\hat{s}^{(2)}} & = &  \mE((\x^{(1)})^T\x^{(2,s)}) - \mbox{objective value after the second iteration} \nonumber \\
\bl{\hat{d}_2^{(2)}} & = & \mE\|\x^{(2,s)}\|_2^2 - \mbox{squared norm after the second iteration} \nonumber \\
\bl{\hat{d}_1^{(2)}} & = & \mE\x_{sol}^T\x^{(2,s)} - \mbox{inner product with $\x_{sol}$ after the second iteration}.
\end{eqnarray}
It goes again without much of a discussion that the last three quantities are viewed as expected/concentrating values.

\subsection{Numerical results -- second iteration}
\label{sec:clupsecondnum}

To follow into the footsteps of the discussion regarding the analysis of the CLuP's first iteration, we in this section present a set of numerical results that relate to the above analysis, in particular to the CLuP's second iteration. As in Section \ref{sec:clupfirstnum} a numerical analysis is needed for both, the theoretical and simulated values. We again start with the theoretical predictions. To that end, we first actually recall that the input to the analysis of the second iteration is
\begin{eqnarray}\label{eq:numressec1}
  \phi^{(1)}  =  \{\bl{p_{err}^{(1)},\hat{s}^{(1)},\hat{d}_2^{(1)},\hat{d}_1^{(1)}},\prp{\hat{\nu}^{(1)},\hat{\gamma}^{(1)},\hat{c}_{1,z}^{(1)}}\}=  \{0.0456,-0.1314,0.7009,0.7628,0.4953,0.9420,0.1753\}.
\end{eqnarray}
From (\ref{eq:sumavoidclup21aa}) for the second iteration's output set of parameters we have
\begin{equation}\label{eq:numressec2}
  \phi^{(2)}=\{\bl{p_{err}^{(2)},\hat{s}^{(2)},\hat{d}_2^{(2)},\hat{d}_1^{(2)}},\prp{\hat{\nu}^{(2)},\hat{\nu}_2^{(2)},\hat{\gamma}^{(2)},\hat{p}^{(1)},\hat{q}^{(1)},\hat{c}_{2,z}^{(1)},\hat{s}_{2}^{(2)},\hat{s}_{3}^{(2)}}\}.
\end{equation}
We show the theoretical values for some of the system parameters obtained based on the above analysis for SNR, $1/\sigma^2=13$[db], $\alpha=0.8$, and $r_{sc}=1.3$ in Table \ref{tab:tabsecavoidnum1}.
\begin{table}[h]
\caption{\textbf{Theoretical} values for various system parameters obtained utilizing the above analysis}\vspace{.1in}
\hspace{-0in}\centering
\small{
\begin{tabular}{||c||c|c|c||c||c|c|c|c|c||}\hline\hline
$1/\sigma^2 $[db]  & $\hat{\nu}^{(2)}$ & $\hat{\nu}_2^{(2)}$ & $\hat{\gamma}^{(2)}$ & $\hat{p}^{(1)}$ & $\hat{s}^{(2)}$ &   $\xi_{RD}^{(2)} $ & $p_{err}^{(2)} $  & $\|\x^{(2,s)}\|_2^2$ &
$(\x_{sol})^T\x^{(2,s)}$ \\ \hline\hline
$13  $ & $\mathbf{2.6924}  $ & $\mathbf{-0.6428}  $ & $\mathbf{1.8911}  $ & $\mathbf{0.70}  $ & $\mathbf{-0.9117}  $ & $\mathbf{0.1594}  $ & $\mathbf{0.00651}  $ & $\mathbf{0.9064}  $ & $\mathbf{0.9340}  $ \\ \hline\hline
\end{tabular}}
\label{tab:tabsecavoidnum1}
\end{table}
Utilizing the equality constraints in (\ref{eq:sumsecavoidclup17a5}) one can also easily obtain the remaining parameters from $\phi^{(2)}$, i.e. $\prp{\{\hat{q}^{(1)},\hat{c}_{2,z}^{(1)},\hat{s}_{2}^{(2)},\hat{s}_{3}^{(2)}}\}$. In Table \ref{tab:tabsecavoidnum2} we show the results obtained utilizing both, the equality constraints in (\ref{eq:sumsecavoidclup17a5}) as well as (\ref{eq:sumsecavoidclup17a8}) and (\ref{eq:sumsecavoidclup17a9}).
\begin{table}[h]
\caption{\textbf{Theoretical} values for $\prp{\{\hat{q}^{(1)},\hat{c}_{2,z}^{(1)},\hat{s}_{2}^{(2)},\hat{s}_{3}^{(2)}}\}$ obtained utilizing (\ref{eq:sumsecavoidclup17a5}) (\textbf{bold}) as well as (\ref{eq:sumsecavoidclup17a8}) and (\ref{eq:sumsecavoidclup17a9}) (\prp{\textbf{purple bold}}) }\vspace{.1in}
\hspace{-0in}\centering
\small{
\begin{tabular}{||c||c|c|c|c||}\hline\hline
$1/\sigma^2 $[db]  & $\hat{s}_2^{(2)}=\hat{d}_1^{(1)}+\hat{s}^{(2)}\sqrt{\hat{d}_2^{(1)}}$ & $\hat{s}_3^{(2)}=1-\hat{d}_1^{(2)}$ & $\hat{c}_{2,z}^{(2)}=\hat{d}_2^{(2)}-2\hat{d}_1^{(2)}+1$ & $\hat{q}^{(1)}=\frac{\hat{s}_3-\hat{s}_2+\sigma^2}{\sqrt{\hat{c}_{2,z}+\sigma^2}\sqrt{\hat{c}_{1,z}+\sigma^2}}$ \\ \hline\hline
$13  $ & $\mathbf{-0.000458}  $/\prp{$\mathbf{-0.000458}  $} & $\mathbf{0.0660}  $/\prp{$\mathbf{0.0660}  $}&
$\mathbf{0.0384}$/\prp{$\mathbf{0.0384}$} & $\mathbf{0.8253}$/\prp{$\mathbf{0.8253}  $}    \\ \hline\hline
\end{tabular}}
\label{tab:tabsecavoidnum2}
\end{table}

\subsubsection{Simulations -- second iteration}
\label{sec:clupsecondsim}

Similarly to what we did in Section \ref{sec:clupfirstsim}, we below provide a set of results obtained through numerical simulations. In Table \ref{tab:tabsecavoidnum3} we show the simulated values that correspond to the above theoretical predictions. We obtained these values for $\alpha=0.8$, $n=1600$, and $r_{sc}=1.3$. As earlier, since $r_{plt}=.1226$ one easily has $\xi_{RD}^{(2)}=r=r_{sc}r_{plt}=.1594$.
\begin{table}[h]
\caption{\textbf{Theoretical}/\bl{\textbf{simulated}} values for various system parameters obtained based on the above analysis}\vspace{.1in}
\hspace{-0in}\centering
\small{
\begin{tabular}{||c||c||c|c|c|c||}\hline\hline
$1/\sigma^2 $[db]  & $\hat{s}^{(2)}$ &   $\xi_{RD}^{(2)} $ & $p_{err}^{(2)} $  & $\|\x^{(2,s)}\|_2^2$ &
$(\x_{sol})^T\x^{(2,s)}$ \\ \hline\hline
$13  $ &  $\mathbf{-0.9117}  $/\bl{$\mathbf{-0.9123}  $} & $\mathbf{0.1594}  $/\bl{$\mathbf{0.1594}  $} & $\mathbf{0.0065}  $/\bl{$\mathbf{0.0072}  $} & $\mathbf{0.9064}  $/\bl{$\mathbf{0.9061}  $} & $\mathbf{0.9340}  $/\bl{$\mathbf{0.9332}  $} \\ \hline\hline
\end{tabular}}
\label{tab:tabsecavoidnum3}
\end{table}
Following further what we did in Section \ref{sec:clupfirstsim}, we below in Table \ref{tab:tabsecavoidnum4} show what kind of effect the change of the problem dimension $n$ has on all the key parameters. The simulated values are again very close to the theoretical predictions. In particular, already when $n=1600$ all parameters are almost equal to the theoretical estimates. One can also observe that as $n$ grows almost all parameters are getting closer to the theoretical values.
\begin{table}[h]
\caption{\bl{\textbf{Simulated}} values for $p_{err}^{(2)}  $, $-\hat{s}^{(2)}  $, $\|\x^{(2,s)}\|_2^2$, and $(\x_{sol})^T\x^{(2,s)}$; $\alpha=0.8$, $r_{sc}=1.3$, $n=\{100,200,400,800,1600\}$}\vspace{.1in}
\hspace{-0in}\centering
\footnotesize{
\begin{tabular}{||c||c||c|c|c|c||}\hline\hline
$ n $ & $\#$ of reps.   & $p_{err}^{(2)}  $ & $-\hat{s}^{(2)}  $ & $\hat{d}_2^{(2)}=\|\x^{(2,s)}\|_2^2  $ & $\hat{d}_1^{(2)}=(\x_{sol})^T\x^{(2,s)}  $  \\ \hline\hline
$100  $ & $165  $ & $\bl{\mathbf{0.0188}}  $ & $\bl{\mathbf{0.9098}}  $ & $\bl{\mathbf{0.8886}}  $ & $\bl{\mathbf{0.9081}}  $ \\ \hline
$200  $ & $954  $ & $\bl{\mathbf{0.0158}}  $ & $\bl{\mathbf{0.9096}}  $ & $\bl{\mathbf{0.8933}}  $ & $\bl{\mathbf{0.9155}}  $ \\ \hline
$400  $ & $600  $ & $\bl{\mathbf{0.0111}}  $ & $\bl{\mathbf{0.9097}}  $ & $\bl{\mathbf{0.8980}}  $ & $\bl{\mathbf{0.9239}}  $ \\ \hline
$800  $ & $465  $ & $\bl{\mathbf{0.0079}}  $ & $\bl{\mathbf{0.9126}}  $ & $\bl{\mathbf{0.9048}}  $ & $\bl{\mathbf{0.9317}}  $ \\ \hline
$1600  $ & $170  $ & $\bl{\mathbf{0.0072}}  $ & $\bl{\mathbf{0.9123}}  $ & $\bl{\mathbf{0.9061}}  $ & $\bl{\mathbf{0.9332}}  $ \\ \hline\hline
$\infty$ -- \textbf{theory} & $-$ & $\mathbf{0.0065}$  & $\mathbf{0.9117}$ & $\mathbf{0.9064}$ & $\mathbf{0.9340}$ \\ \hline\hline
\end{tabular}}
\label{tab:tabsecavoidnum4}
\end{table}
Finally, in Table \ref{tab:tabsecavoidnum5} we show the progress through the first two iterations of all the critical parameters and their simulated values (as in Table \ref{tab:tabsecavoidnum3}, $\alpha=0.8$, $r_{sc}=1.3$, and $n=1600$).
\begin{table}[h]
\caption{\textbf{Theoretical}/\bl{\textbf{simulated}} values for key system parameters through the first two iterations}\vspace{.1in}
\hspace{-0in}\centering
\small{
\begin{tabular}{||c||c||c|c|c|c|c||}\hline\hline
$k$ & $1/\sigma^2 $[db]  & $-\hat{s}^{(k)}$ &   $\xi_{RD}^{(k)} $ & $p_{err}^{(k)} $  & $\hat{d}_2^{(k)}=\|\x^{(k,s)}\|_2^2$ &
$\hat{d}_1^{(k)}=(\x_{sol})^T\x^{(k,s)}$ \\ \hline\hline
$1$ & $13  $ & $\mathbf{0.1314  }/\bl{\mathbf{0.1308  }}$ & $\mathbf{0.1594  }/\bl{\mathbf{0.1594  }}$ & $\mathbf{0.0456  }/\bl{\mathbf{0.0454  }}$ & $\mathbf{0.7009  }/\bl{\mathbf{0.7028  }}$ & $\mathbf{0.7628  }/\bl{\mathbf{0.7642  }}$ \\ \hline\hline
$2$ & $13  $ &  $\mathbf{0.9117}  $/\bl{$\mathbf{0.9123}  $} & $\mathbf{0.1594}  $/\bl{$\mathbf{0.1594}  $} & $\mathbf{0.0065}  $/\bl{$\mathbf{0.0072}  $} & $\mathbf{0.9064}  $/\bl{$\mathbf{0.9061}  $} & $\mathbf{0.9340}  $/\bl{$\mathbf{0.9332}  $} \\ \hline\hline
\end{tabular}}
\label{tab:tabsecavoidnum5}
\end{table}

\section{CLuP -- $(k+1)$-th iteration performance analysis}
\label{sec:clupkit}

Given that the above discussion demonstrated that one can handle the algorithm's first two iterations one naturally wonders can it be extended so that it eventually covers all iterations. The answer is yes. That is indeed in principle possible. Moreover, not much more needs to be added to the already introduced technical/strategic components of the analysis. However, the number of the running parameters starts to rapidly increase. That doesn't change much when it comes to conceptual handling all of them. What does become affected though are the numerical evaluations. We will below briefly sketch how one can extend the above analysis and then show a couple of shortcuts regarding the numerical evaluations. In the first part we will essentially closely follow the presentation of the previous section. Instead of discussing all the details we will focus on the final results.

We of course start by restating the CLuP's $k+1$-th iteration underlying optimization problem. Namely, it boils down to finding $\x^{(k+1)}$ in the following way
\begin{eqnarray}
\x^{(k+1)}=\frac{\x^{(k+1,s)}}{\|\x^{(k+1,s)}\|_2} \quad \mbox{with}\quad \x^{(k+1,s)}=\mbox{arg}\min_{\x} & & -(\x^{(k)})^T\x  \nonumber \\
\mbox{subject to} & & \|\y-A\x\|_2\leq r\nonumber \\
&& \x\in \left [-\frac{1}{\sqrt{n}},\frac{1}{\sqrt{n}}\right ]^n. \label{eq:kit0}
\end{eqnarray}
To handle this problem we of course rely on RDT and what we presented in Section \ref{sec:clupsecondit}.

\vspace{.1in}
\noindent \xmyboxc{\bl{\emph{\textbf{1. First step -- \dgr{Forming the deterministic Lagrange dual} }}}}

Following closely what was done in Section \ref{sec:clupsecondit} one can arrive to the following analogue of (\ref{eq:avoidsecclup1a2a3})
\begin{eqnarray}
\max_{\gamma,\gamma_{j-1}\geq 0}\min_{\z,\z^{(j)}} \max_{\|\lambda\|_2=1,\|\lambda^{(j-1)}\|_2=1}& & (\x^{(j)})^T\z+\gamma\z_{sc}\lambda^T\begin{bmatrix}A &\v\end{bmatrix}\begin{bmatrix}\z\\\sigma\end{bmatrix}/\z_{sc}- \gamma r+f_k\nonumber \\
\mbox{subject to} && \z\in \left [0,2/\sqrt{n}\right ]^n, \z_{sc}=\sqrt{\|\z\|_2^2+\sigma^2} \nonumber \\
 & & \|\z^{(j)}\|_2^2=\hat{c}_{2,z}^{(j)}\nonumber \\
& & (\x^{(j-1)})^T\z^{(j)}=\hat{s}^{(j)} \nonumber \\
&& \z^{(j)}\in \left [0,2/\sqrt{n}\right ]^n, \z_{sc}^{(j)}=\sqrt{\|\z^{(j)}\|_2^2+\sigma^2},\label{eq:kit1}
\end{eqnarray}
where
\begin{equation}\label{eq:kit2}
f_k=\sum_{j=1}^{k}\left (\gamma_{j-1}\z_{sc}^{(j)}(\lambda^{(j-1)})^T\begin{bmatrix}A &\v\end{bmatrix}\begin{bmatrix}\z^{(j)}\\\sigma\end{bmatrix}/\z_{sc}^{(j)}- \gamma_{j-1} r\right ),
\end{equation}
and for completeness we also introduced $\hat{c}_{2,z}^{(1)}=\hat{c}_{1,z}^{(1)}$.

\vspace{.1in}
\noindent \xmyboxc{\bl{\emph{\textbf{2. Second step -- \dgr{Forming the Random dual} }}}}

\vspace{.1in}

To introduce the random dual we consider the following analogues to (\ref{eq:avoidsecclup1a2a5})-(\ref{eq:avoidsecclup1a2a7}). First we look at the following object
\begin{equation}\label{eq:kit3}
  f_{RD}= (\x^{(1)})^T\z+ \gamma \z_{sc} f_{RD,k+1}- \gamma r+\sum_{j=1}^{k}\gamma_{j-1}\z_{sc}^{(1)} f_{RD,j}- \gamma_{j-1} r,\nonumber \\
\end{equation}
where
\begin{equation}
f_{RD,k+1}  =  \lambda^T\g^{(k,q)}+(\h^{(k,p)})^T\z/\z_{sc} +h_0^{(k,p)}\sigma/\z_{sc}, \label{eq:kit4}
\end{equation}
and
\begin{eqnarray}
f_{RD,j} & = & (\lambda^{(j-1)})^T\g^{(j-1,q)}+(\h^{(j-1,p)})^T\z^{(j)}/\z_{sc}^{(j)} +h_0^{(j-1,p)}\sigma/\z_{sc}^{(j)}, \label{eq:kit5}
\end{eqnarray}
and for each $i$
\begin{eqnarray}\label{eq:kit6}
Q^{(k+1)} & = &   \mE\begin{bmatrix}\g_i^{(0,q)} & \g_i^{(1,q)} & \dots & \g_i^{(k,q)}\end{bmatrix}^T\begin{bmatrix}\g_i^{(0,q)} & \g_i^{(1,q)} & \dots & \g_i^{(k,q)}\end{bmatrix} \nonumber \\
P^{(k+1)} & = &     \mE\begin{bmatrix}\h_i^{(0,p)} & \h_i^{(1,p)} & \dots & \h_i^{(k,p)}\end{bmatrix}^T\begin{bmatrix}\h_i^{(0,p)} & \h_i^{(1,p)} & \dots & \h_i^{(k,p)}\end{bmatrix}.
\end{eqnarray}
$h_0$ doesn't really play much of role but one can for the completeness assume that it is an extension of $\h$ indexed by $0$ so that formally (\ref{eq:kit6}) holds for $h_0$ as well. Also, as expected, the components of all $\g$ and $\h$ are i.i.d. standard normals (the independence is over index $i$; also, $\g$ and $\h$ are independent of each other for any set of indices). Now one can define
\begin{equation}\label{eq:kit8}
  Z^{(k+1)}=\begin{bmatrix} \begin{bmatrix}\z^{(1)}\\\sigma\end{bmatrix}\frac{1}{\z_{sc}^{(1)}} & \begin{bmatrix}\z^{(2)}\\\sigma\end{bmatrix}\frac{1}{\z_{sc}^{(2)}}  & \dots & \begin{bmatrix}\z^{(k)}\\\sigma\end{bmatrix}\frac{1}{\z_{sc}^{(k)}}  & \begin{bmatrix}\z\\\sigma\end{bmatrix}\frac{1}{\z_{sc}} \end{bmatrix}.
\end{equation}
and
\begin{equation}\label{eq:kit8}
  \Lambda^{(k+1)}=\begin{bmatrix} \lambda^{(0)} & \lambda^{(1)} & \dots & \lambda^{(k-1)} & \lambda \end{bmatrix}.
\end{equation}
One then has for the  \bl{\textbf{random dual}} that corresponds to (\ref{eq:avoidsecclup1a2a7}) the following
\begin{eqnarray}
\min_{Q^{(k+1)}}\max_{P^{(k+1)}}\max_{\gamma,\gamma_j\geq 0}\min_{\z,\z^{(j)}} \max_{\|\lambda\|_2=1,\|\lambda_j\|_2=1}& & f_{RD} \nonumber \\
\mbox{subject to} && \z\in \left [0,2/\sqrt{n}\right ]^n, \z_{sc}=\sqrt{\|\z\|_2^2+\sigma^2} \nonumber \\
 & & \|\z^{(j)}\|_2^2=\hat{c}_{2,z}^{(j)}\nonumber \\
& & (\x^{(j-1)})^T\z^{(j)}=\hat{s}^{(j)} \nonumber \\
&& \z^{(j)}\in \left [0,2/\sqrt{n}\right ]^n, \z_{sc}^{(j)}=\sqrt{\|\z^{(j)}\|_2^2+\sigma^2}\nonumber \\
& &  (Z^{(k+1)})^TZ^{(k+1)}= Q^{(k+1)}\nonumber \\
& & (\Lambda^{(k+1)})^T\Lambda^{(k+1)}= P^{(k+1)}.\label{eq:kit9}
\end{eqnarray}
Where the remarks similar to (\ref{eq:avoidsecclup1a2a8}) and (\ref{eq:avoidsecclup1a2a9}) remain in place. In other words, the elements of matrices $Q^{(k+1)}$ and $P^{(k+1)}$ are basically the predicated concentrating points of all possible the so-called optimal achieving $\begin{bmatrix}\z\\\sigma\end{bmatrix}$- and $\lambda$-cross-overlaps, respectively.
 Similarly to what we had in the previous section, the concentrating values are not just over the standard randomness but also over certain the so-called Gibbsian measures randomness as well.

\vspace{.1in}
\noindent \xmyboxc{\bl{\emph{\textbf{3. Third step -- \dgr{Rehandling the Random dual of the first $k$ iterations} }}}}

One starts with handling the first iteration
\begin{eqnarray}
\min_{\z^{(1)}} \max_{\|\lambda_0\|_2=1}& & \z_{sc}^{(1)}f_{RD,1} \nonumber \\
\mbox{subject to}  & & \|\z^{(1)}\|_2^2=\hat{c}_{2,z}^{(1)}\nonumber \\
& & (\x^{(0)})^T\z^{(1)}=\hat{s}^{(1)} \nonumber \\
&& \z^{(1)}\in \left [0,2/\sqrt{n}\right ]^n, \z_{sc}^{(1)}=\sqrt{\|\z^{(1)}\|_2^2+\sigma^2},\label{eq:kit10}
\end{eqnarray}
then moves to the second and so on. The key output quantities after the $k$-th iteration (some of which are also needed for the $(k+1)$-th iteration) are
\begin{equation}
\x_i^{(j,s)},\z_i^{(j)},\lambda^{(j-1)}, 1\leq j\leq k,\label{eq:kit11}
\end{equation}
and
\begin{equation}\label{eq:kit12}
  \phi^{(k)}=\{\bl{p_{err}^{(k)},\hat{s}^{(k)},\hat{d}_2^{(k)},\hat{d}_1^{(k)}},\prp{\hat{\nu}^{(k)},\hat{\nu}_2^{(k)},\hat{\gamma}^{(k)},\hat{P}^{(k)},\hat{Q}^{(k)},\hat{c}_{2,z}^{(k)},\hat{s}_{2}^{(k)},\hat{s}_{3}^{(k)}}\},
\end{equation}
where we point out that $\hat{\nu}^{(k)}$ and $\hat{s}_{2}^{(k)}$ are $(k-1)$-dimensional vectors (this will become clearer below).

\vspace{.1in}
\noindent \xmyboxc{\bl{\emph{\textbf{4. Fourth step -- \dgr{Handling the real Random dual of the $(k+1)$-th iteration} }}}}

Following into the footsteps of what was done earlier, we finally have the following optimization problem (essentially the $(k+1)$-th iteration analogue to the second iteration's (\ref{eq:avoidsecclup1a2a15}))
\begin{eqnarray}
\min_{\z} \max_{\|\lambda\|_2=1}& & \z_{sc}f_{RD,k+1} \nonumber \\
\mbox{subject to}  & & \z\in \left [0,2/\sqrt{n}\right ]^n, \z_{sc}=\sqrt{\|\z\|_2^2+\sigma^2}\nonumber \\
& & \|\z\|_2^2=c_{2,z}\nonumber \\
& & (\x^{(j,s)})^T\z=s_{2,j},1\leq j\leq k \nonumber \\
& & \frac{1}{\sqrt{n}}\1^T\z=s_3 \nonumber \\
& &  (Z^{(k+1)})^TZ^{(k+1)}= Q^{(k+1)}\nonumber \\
& & (\Lambda^{(k+1)})^T\Lambda^{(k+1)}= P^{(k+1)}.\label{eq:kit13}
\end{eqnarray}
where the first $k$ columns of both $Z^{(k+1)}$ and $\Lambda^{(k+1)}$ are obtained after the $k$-th iteration. Analogously to (\ref{eq:avoidsecclup1a2a16}) one should here also keep in mind that
\begin{equation}\label{eq:kit14}
  Q_{k+1,j}^{(k+1)}=\begin{bmatrix}\z\\\sigma\end{bmatrix}^T\begin{bmatrix}\z^{(j)}\\\sigma\end{bmatrix}/\z_{sc}/\z_{sc}^{(j)}
  =\frac{s_3-s_{2,j}+\sigma^2}{\sqrt{c_{2,z}+\sigma^2}\sqrt{\hat{c}_{2,z}^{(j)}+\sigma^2}}.
\end{equation}
Taking $f_{RD,k+1}$ from (\ref{eq:kit4}) and plugging it back in (\ref{eq:kit13}) we have
\begin{eqnarray}
\min_{\z} \max_{\|\lambda\|_2=1}& & \z_{sc}(\lambda^T\g^{(k,q)}+(\h^{(k,p)})^T\z/\z_{sc} +h_0^{(k,p)}\sigma/\z_{sc}) \nonumber \\
\mbox{subject to}  & & \z\in \left [0,2/\sqrt{n}\right ]^n, \z_{sc}=\sqrt{\|\z\|_2^2+\sigma^2}\nonumber \\
& & \|\z\|_2^2=c_{2,z}\nonumber \\
& & (\x^{(j,s)})^T\z=s_{2,j},1\leq j\leq k \nonumber \\
& & \frac{1}{\sqrt{n}}\1^T\z=s_3 \nonumber \\
& &  (Z^{(k+1)})^TZ^{(k+1)}= Q^{(k+1)}\nonumber \\
& & (\Lambda^{(k+1)})^T\Lambda^{(k+1)}= P^{(k+1)}.\label{eq:kit15}
\end{eqnarray}
Neglecting the last term in the objective we finally have the following analogue to (\ref{eq:avoidsecclup1a2a17})
\begin{eqnarray}
\min_{\z} \max_{\|\lambda\|_2=1}& & \z_{sc}\lambda^T\g^{(k,q)}+(\h^{(k,p)})^T\z \nonumber \\
\mbox{subject to}  & & \z\in \left [0,2/\sqrt{n}\right ]^n, \z_{sc}=\sqrt{\|\z\|_2^2+\sigma^2}\nonumber \\
& & \|\z\|_2^2=c_{2,z}\nonumber \\
& & (\x^{(j,s)})^T\z=s_{2,j},1\leq j\leq k \nonumber \\
& & \frac{1}{\sqrt{n}}\1^T\z=s_3 \nonumber \\
& & (\Lambda^{(k+1)})^T\Lambda^{(k+1)}= P^{(k+1)}.\label{eq:kit16}
\end{eqnarray}
We will also denote
\begin{eqnarray}
f_{sph}^{(k+1)}= \frac{1}{\sqrt{m}}\mE\max_{\|\lambda\|_2=1}& & \lambda^T\g^{(k,q)}\nonumber \\
\mbox{subject to}  & &  (\Lambda^{(k+1)})^T\Lambda^{(k+1)}= P^{(k+1)},\label{eq:kit17}
\end{eqnarray}
and then rewrite (\ref{eq:kit16}) as
\begin{eqnarray}
\min_{\z} \max_{\|\lambda\|_2=1}& & \sqrt{m}\z_{sc}f_{sph}^{(k+1)}+(\h^{(k,p)})^T\z \nonumber \\
\mbox{subject to}  & & \z\in \left [0,2/\sqrt{n}\right ]^n, \z_{sc}=\sqrt{\|\z\|_2^2+\sigma^2}\nonumber \\
& & \|\z\|_2^2=c_{2,z}\nonumber \\
& & (\x^{(j,s)})^T\z=s_{2,j},1\leq j\leq k \nonumber \\
& & \frac{1}{\sqrt{n}}\1^T\z=s_3,\label{eq:kit18}
\end{eqnarray}
which for all practical purposes is an analogue to (\ref{eq:avoidsecclup1a2a19}). Now one can proceed as in the analysis of the second iteration right after (\ref{eq:avoidsecclup1a2a19}) and write the resulting Lagrange dual to obtain a problem structurally similar to (\ref{eq:avoidclup11})
\begin{eqnarray}
\max_{\gamma,\nu,\nu_2}\min_{\z} & & {\cal L}(\gamma,\nu,\nu_2)\nonumber \\
\mbox{subject to}  & & \z\in \left [0,2/\sqrt{n}\right ]^n,\label{eq:kit19}
\end{eqnarray}
where
\begin{eqnarray}\label{eq:kit20}
{\cal L}(\gamma,\nu,\nu_2) & = & \sqrt{\alpha n}\sqrt{\|\z\|_2^2+\sigma^2}f_{sph}^{(k+1)}\nonumber \\
& & +
\h^{(k,p)}\z +\gamma (\|\z\|_2^2-c_{2,z})+\sum_{j=1}^{k}\tilde{\nu}_j ((\x^{(j,s)})^T\z-s_{2,j})+\nu_2(\1^T\z-\sqrt{n}s_3).
\end{eqnarray}
Following closely what we did earlier, we use $\xi_{RD}^{(k)}(\alpha,\sigma;P^{(k+1)},Q^{(k+1)},c_{2,z},s_{2,j},s_3,\gamma,\tilde{\nu}_j,\nu_2)$ to denote the expected value of the above objective after it is scaled by $\sqrt{n}$. One can then follow further the machinery of say \cite{StojnicDiscPercp13}) and analogously to (\ref{eq:avoidseccclup12}) (and earlier (\ref{eq:avoidclup12})) define
\begin{eqnarray}
f_{box,k}(\h^{(k,p)};c_{2,z},s_{2,j},s_3) = \max_{\gamma,\tilde{\nu}_j,\nu_2}\min_{\z}&& \h^{(k,p)}\z +\gamma (\|\z\|_2^2-c_{2,z})+\sum_{j=1}^{k}\tilde{\nu}_j ((\x^{(j,s)})^T\z-s_{2,j})\nonumber \\
& &+\nu_2(\frac{1}{\sqrt{n}}\1^T\z-s_3)\nonumber \\
\mbox{subject to} & & \z\in \left [0,2/\sqrt{n}\right ]^n.\label{eq:kit21}
\end{eqnarray}
Now, here is the key point. When one compares (\ref{eq:avoidseccclup12}) to (\ref{eq:avoidclup12})
the difference is an extra constraint related to $s_3$. On the other hand, when one compares (\ref{eq:kit21}) to (\ref{eq:avoidseccclup12}) the difference is that instead of one constraint related to $s_2$ in (\ref{eq:avoidseccclup12}) one here has $k$ constraints related to $s_{2,j}$. However, the same conclusion made after (\ref{eq:avoidseccclup12}) applies here as well.
In other words, one can still utilize the solution obtained after (\ref{eq:avoidseccclup12}) with a few modifications to account for $s_{2,j}$ and $\tilde{\nu}_j$. One then effectively replaces (\ref{eq:avoidsecclup13}) (and earlier (\ref{eq:avoidclup13}) and ultimately (110) from \cite{StojnicDiscPercp13}) with
\begin{eqnarray}
f_{box,k+1}(\h^{(k,p)};c_{2,z},s_{2,j},s_3)  =  \max_{\gamma,\nu} & & \frac{1}{\sqrt{n}}\left (\sum_{i=1}^{n}f_{box,k+1}^{(1)}(\h_i^{(1,p)},\gamma,\tilde{\nu}_j,\nu_2)\right )\nonumber \\
&&-\sum_{j=1}^{k}\tilde{\nu}_j s_2\sqrt{n}-\nu_2 s_3\sqrt{n}-\gamma c_{2,z}\sqrt{n},\nonumber \\\label{eq:kit22}
\end{eqnarray}
where
\begin{equation}
f_{box,k+1}^{(1)}(\h_i^{(k,p)},\gamma,\tilde{\nu}_j,\nu_2)=\begin{cases}0, & \h_i^{(k,p)}+\sum_{j=1}^{k}\tilde{\nu}_j\x^{(j,s)}_i+\nu_2\geq 0\\
-\frac{(\h_i^{(k,p)}+\sum_{j=1}^{k}\tilde{\nu}_j\x^{(j,s)}_i+\nu_2)^2}{4\gamma}, & -4\gamma\leq \h_i^{(k,p)}+\sum_{j=1}^{k}\tilde{\nu}_j\x^{(j,s)}_i+\nu_2\leq 0\\
2(\h_i^{(k,p)}+\sum_{j=1}^{k}\tilde{\nu}_j\x^{(j,s)}_i+\nu_2)+4\gamma, & \h_i^{(1,p)}+\sum_{j=1}^{k}\tilde{\nu}_j\x^{(j,s)}_i+\nu_2\leq -4\gamma,
\end{cases}\label{eq:kit23}
\end{equation}
with the usual scaling discussion that we had after (\ref{eq:avoidsecclup14}) being applicable here to $\gamma$, $\tilde{\nu}_j$, $\nu_2$, and $\x^{(0)}_i$ as well. Analogously to (\ref{eq:avoidsecclup14a}) and (\ref{eq:avoidsecclup15a})  one then also has for the optimizing $\z_i$ and $\x_i^{(k+1,s)}$
\begin{eqnarray}
\z_i^{(k+1)} & = & \frac{1}{\sqrt{n}}\min \left (\max\left (0,-\left (\frac{\h_i^{(k,p)}+\sum_{j=1}^{k}\tilde{\nu}_j\x^{(j,s)}_i+\nu_2}{2\gamma}\right )\right ),2\right )\nonumber \\
\x_i^{(k+1,s)} & = & \frac{1}{\sqrt{n}}-\z_i^{(2)}=\frac{1}{\sqrt{n}}\left (1-\min \left (\max\left (0,-\left (\frac{\h_i^{(k,p)}+\sum_{j=1}^{k}\tilde{\nu}_j\x^{(j,s)}_i+\nu_2}{2\gamma}\right )\right ),2\right )\right ),\label{eq:kit24}
\end{eqnarray}
where $\x^{(j,s)}_i,1\leq j\leq k$ are obtained after the $k$-th iteration as stated in (\ref{eq:kit11}). Analogously to (\ref{eq:avoidsecclup15}) and (\ref{eq:avoidsecclup16}) we then have
\begin{equation}
\mE f_{box,k+1}^{(1)}(\h_i^{(k,p)},\gamma,\tilde{\nu}_j,\nu_2)=\rho I_{1}^{(k+1)}(\gamma,\nu,\nu_2,\hat{\nu}^{(1)})+(1-\rho) I_{1}^{(k+1)}(\gamma,\nu,\nu_2,-\hat{\nu}^{(1)}),\label{eq:kit25}
\end{equation}
where
\begin{equation}
I_{1}^{(k+1)}(\gamma,\nu,\nu_2,\hat{\nu}^{(1)})  =  \mE((\h_i^{(k,p)}+\sum_{j=1}^{k}\tilde{\nu}_j\x^{(j,s)}_i+\nu_2)\z_i^{(k+1)}+\gamma\left (\z_i^{(k+1)}\right )^2),
\label{eq:kit26}
\end{equation}
and for $\gamma<0$ the term under the expectation is basically zero if negative. Finally one arrives at
\begin{eqnarray}
\xi_{RD}^{(k+1)}(\alpha,\sigma;P^{(k+1)},Q^{(k+1)},c_{2,z},s_{2,j},s_3,\gamma,\tilde{\nu}_j,\nu_2) & = & \sqrt{\alpha}\sqrt{c_{2,z}+\sigma^2} f_{sph}^{(k+1)} +\mE f_{box,k+1}^{(1)}(\h_i^{(k,p)},\gamma,\tilde{\nu}_j,\nu_2)\nonumber \\
& &-\sum_{j=1}^{k}\tilde{\nu}_j s_{2,j}-\nu_2 s_3-\gamma c_{2,z}. \label{eq:kit27}
\end{eqnarray}
Below we present a brief summary of the above analysis. Since it conceptually closely follows the summaries that we presented after the analysis of the first and the second iteration we will try to make this summary as short as possible and basically rely on many ideas already introduced in earlier sections.

\subsection{CLuP -- summary of the $(k+1)$-th iteration performance analysis}
\label{sec:clupkitsummary}

We split the summary into two parts. The first one that is basically trivial and the second one that contains the key components of the analysis.

\vspace{.1in}
\noindent \xmyboxc{\bl{\emph{\textbf{Summarized formalism to handle the CLuP's $(k+1)$-th iteration }}}}

As mentioned above there are two parts that we recognize as critical in understanding the whole analysis mechanism.

\vspace{.1in}
\noindent \xmyboxc{\bl{\emph{\textbf{I) First part -- \dgr{Handling the first $k$ iterations} }}}}

This basically assumes just a simple recognition that the whole mechanism is in a way inductive in nature for any $k>2$. So, to start the induction one then assumes that the first $k$ iterations are doable (for $k=1$ and $k=2$ we have already shown that this is indeed the case) and continues further. To continue further one also recognizes the conclusion of the third step in the above discussion. That essentially amounts to recognizing that the key output quantities after the $k$-th iteration are
\begin{equation}
\x_i^{(j,s)},\z_i^{(j)},\lambda^{(j-1)}, 1\leq j\leq k,\label{eq:sumkit1}
\end{equation}
and
\begin{equation}\label{eq:sumkit2}
  \phi^{(k)}=\{\bl{p_{err}^{(k)},\hat{s}^{(k)},\hat{d}_2^{(k)},\hat{d}_1^{(k)}},\prp{\hat{\nu}^{(k)},\hat{\nu}_2^{(k)},\hat{\gamma}^{(k)},\hat{P}^{(k)},\hat{Q}^{(k)},\hat{c}_{2,z}^{(k)},\hat{s}_{2}^{(k)},\hat{s}_{3}^{(k)}}\},
\end{equation}
where $\hat{\nu}^{(k)}$ and $\hat{s}_{2}^{(k)}$ are $(k-1)$-dimensional vectors (it is obvious but for the completeness we also mention that $\hat{\nu}^{(k)}$ is essentially the vector of the optimal $\tilde{\nu}_j,1\leq j\leq k-1$ at the $k$-th iteration and analogously, $\hat{s}_{2}^{(k)}$ is the vector of the optimal $s_{2,j},1\leq j\leq k-1$ at the $k$-th iteration). As usual, a particular emphasis is on
\begin{eqnarray}\label{eq:sumkit3}
\bl{p_{err}^{(k)}} & - &  \mbox{probability of error after the $k$-th iteration}\nonumber \\
\bl{\hat{s}^{(k)}} & = &  \mE((\x^{(k-1)})^T\x^{(k,s)}) - \mbox{objective value after the $k$-th iteration} \nonumber \\
\bl{\hat{d}_2^{(k)}} & = & \mE\|\x^{(k,s)}\|_2^2 - \mbox{squared norm after the $k$-th iteration} \nonumber \\
\bl{\hat{d}_1^{(k)}} & = & \mE\x_{sol}^T\x^{(k,s)} - \mbox{inner product with $\x_{sol}$ after the $k$-th iteration}.
\end{eqnarray}

\vspace{.1in}
\noindent \xmyboxc{\bl{\emph{\textbf{II) Second part -- \dgr{Handling the $(k+1)$-th iteration} }}}}

We start with writing analogously to (\ref{eq:sumsecavoidclup17a}) and (\ref{eq:sumsecavoidclup17aa})
 \begin{eqnarray}
\phi_a^{(k+1)}=\mbox{arg} \min_{s_{2,j},s_3} & & \frac{s_{2,k}-\hat{d}_1^{(k)}}{\sqrt{\hat{d}_2^{(k)}}}\nonumber \\
\mbox{subject to} & & \min_{Q^{(k
+1)}}\max_{P^{(k+1)}}\min_{0\leq c_{2,z}\leq 4}\max_{\gamma,\tilde{\nu}_j,\nu_2}\xi_{RD}^{(k+1)}(\alpha,\sigma;P^{(k+1)},Q^{(k+1)},c_{2,z},s_{2,j},s_3,\gamma,\tilde{\nu}_j,\nu_2)=r,\nonumber\\\label{eq:sumkit4}
\end{eqnarray}
where
\begin{equation}\label{eq:sumkit5}
\phi_a^{(k+1)}=\{\hat{P}^{(k+1)},\hat{Q}^{(k+1)},\hat{\nu}^{(k+1)},\hat{\nu}_2^{(k+1)},\hat{\gamma}^{(k+1)},\hat{c}_{2,z}^{(k+1)},\hat{s}_2^{(k+1)},\hat{s}_3^{(k+1)}\},
\end{equation}
and obviously $\hat{\nu}^{(k+1)}$ is the $k$-dimensional vector of the optimal $\tilde{\nu}_j,1\leq j\leq k$ and $\hat{s}_{2}^{(k+1)}$ is the $k$-dimensional vector of the optimal $s_{2,j},1\leq j\leq k$.
One can then repeat all the steps between (\ref{eq:sumsecavoidclup17aa}) and (\ref{eq:sumsecavoidclup17a5}) to arrive at
\begin{eqnarray}
\phi_b^{(k+1)}=\mbox{arg} \min_{s,d_1^{(k+1)},d_2^{(k+1)},s_{2,j}} & & s\nonumber \\
\mbox{subject to} & & \max_{P^{(k+1)}}\min_{0\leq c_{2,z}\leq 4}\max_{\gamma,\nu,\nu_2}\xi_{RD}^{(k+1)}(\alpha,\sigma;P^{(k+1)},Q^{(k+1)},c_{2,z},s_{2,j},s_3,\gamma,\tilde{\nu}_j,\nu_2)=r\nonumber \\
& & s_{2,k}=\hat{d}_1^{(k)}+s\sqrt{\hat{d}_2^{(k)}}\nonumber \\
& & s_3=1-d_1^{(k+1)}\nonumber \\
& & c_{2,z}=d_2^{(k+1)}-2d_1^{(k+1)}+1\nonumber \\
& & Q_{k+1,j}^{(k+1)} =\frac{s_3-s_{2,j}+\sigma^2}{\sqrt{c_{2,z}+\sigma^2}\sqrt{\hat{c}_{2,z}^{(j)}+\sigma^2}},\label{eq:sumkit6}
\end{eqnarray}
where
\begin{equation}\label{eq:sumkit7}
\phi_b^{(k+1)}=\{\hat{P}^{(k+1)},\hat{Q}^{(k+1)},\hat{\nu}^{(k+1)},\hat{\nu}_2^{(k+1)},\hat{\gamma}^{(k+1)},\hat{s}^{(k+1)},\hat{d}_2^{(k+1)},\hat{d}_1^{(k+1)},\hat{s}_2^{(k+1)}\}.
\end{equation}
Following (\ref{eq:sumsecavoidclup17a6}) and (\ref{eq:sumsecavoidclup17a7}) one also has
\begin{equation}\label{eq:sumkit8}
  p_{err}^{(k+1)}=1-(\rho p_{cor}(\hat{\nu}^{(1)})+(1-\rho) p_{cor}(-\hat{\nu}^{(1)})),
\end{equation}
where
\begin{equation}\label{eq:sumkit9}
 p_{cor}(\hat{\nu}^{(1)})=\mE((\mbox{sign}(\x^{(k+1,s)})+1)/2).
\end{equation}
We skip rewriting the trivial analogues/adjustments of the considerations (\ref{eq:sumsecavoidclup17a8}) and (\ref{eq:sumsecavoidclup17a9}) and instead focus
at the output of the $(k+1)$-th iteration. Besides the solution, $\x^{(k+1,s)}$, the following set of \bl{critical} plus \prp{auxiliary} parameters is the output of the $(k+1)$-th iteration:
\begin{equation}\label{eq:sumkit10}
  \phi^{(k+1)}=\{\bl{p_{err}^{(k+1)},\hat{s}^{(k+1)},\hat{d}_2^{(k+1)},\hat{d}_1^{(k+1)}},\prp{\hat{\nu}^{(k+1)},\hat{\nu}_2^{(k+1)},\hat{\gamma}^{(k+1)},\hat{P}^{(k+1)},\hat{Q}^{(k+1)},\hat{c}_{2,z}^{(k+1)},\hat{s}_{2}^{(k+1)},\hat{s}_{3}^{(k+1)}}\},
\end{equation}
where we again for simplicity use the wording to emphasize
\begin{eqnarray}\label{eq:sumki11}
\bl{p_{err}^{(k+1)}} & - &  \mbox{probability of error after the $(k+1)$-th iteration}\nonumber \\
\bl{\hat{s}^{(k+1)}} & = &  \mE((\x^{(k)})^T\x^{(k+1,s)}) - \mbox{objective value after the $(k+1)$-th iteration} \nonumber \\
\bl{\hat{d}_2^{(k+1)}} & = & \mE\|\x^{(k+1,s)}\|_2^2 - \mbox{squared norm after the $(k+1)$-th iteration} \nonumber \\
\bl{\hat{d}_1^{(k+1)}} & = & \mE\x_{sol}^T\x^{(k+1,s)} - \mbox{inner product with $\x_{sol}$ after the $(k+1)$-th iteration}.
\end{eqnarray}

\subsection{Numerical results -- $(k+1)$-th iteration}
\label{sec:clupkitnum}

Looking carefully at the above analysis one quickly observes that in principle all the quantities of interest can be determined. However, there are quite a few of them that need to be optimized and there are quite a few numerical integrations that may have to be performed along the lines of such optimizations. In a separate paper we will present a systematic way to determine all these parameters. To avoid being sidetracked with such a large number of numerical considerations in the introductory paper where the goal is to present the key concepts behind the complexity analysis, we here provide estimates obtained in a simpler and much faster way. Namely, instead of systematically handling all of the above parameters, for the third and higher iterations we utilized the random dual itself. Given that the random dual is a much simpler program than the original primal one can run it on much larger dimensions. We have done so and manually estimated matrices $P$, $Q$, and vectors $d_1$ and $d_2$ throughout the process. An estimate for the $P$ and $Q$ matrices that we obtained is the following
\begin{equation}
  P^{(5)}=\begin{bmatrix}
             1 & .70 & .63 & .625  & .60\\
           .70 &  1 & .95 & .92  & .88\\
           .63 & .95 &  1 & .99  & .98 \\
           .625 & .92 & .99 &  1  & .996 \\
           .60 & .88 & .98 & .996 & 1
                     \end{bmatrix}
\end{equation}
\begin{equation}
  Q^{(5)}=\begin{bmatrix}
             1 & .825  & .69  & .65   & .63\\
           .825 &  1   & .955 & .92   & .9\\
           .69 & .955 &   1  & .995  & .99\\
           .65 & .92  & .995 &  1    & .999\\
           .63 & .9   & .99  & .999  & 1
                                \end{bmatrix}.
\end{equation}
Moreover, in Table \ref{tab:tabkit1} we give the estimated values for vectors $d_1$ and $d_2$.
\begin{table}[h]
\caption{Estimates for $d_1$ and $d_2$; $\alpha=0.8$; $r_{sc}=1.3$}\vspace{.1in}
\hspace{-0in}\centering
\footnotesize{
\begin{tabular}{||c||c|c|c|c|c||}\hline\hline
$k$  & $1$ & $2$ & $3$ & $4$ & $5$ \\ \hline\hline
$\hat{d}_1^{(k)}$  &  $\prp{\mathbf{0.7628  }}$ & $\prp{\mathbf{0.9340  }}$ & $\prp{\mathbf{0.9640  }}$ & $\prp{\mathbf{0.9663  }}$ & $\prp{\mathbf{0.9667  }}$ \\ \hline
$\hat{d}_2^{(k)}$  & $\prp{\mathbf{0.7009  }}$ & $\prp{\mathbf{0.9064  }}$ & $\prp{\mathbf{0.9420  }}$ & $\prp{\mathbf{0.9445  }}$ & $\prp{\mathbf{0.9450  }}$ \\ \hline\hline
\end{tabular}}
\label{tab:tabkit1}
\end{table}
In Table \ref{tab:tabkit2}, we complement these values for $\hat{d}_1^{(k)}$ and $\hat{d}_2^{(k)}$ with the estimated values for $p_{err}^{(k)}$ and $\hat{s}^{(k)}$ as well. Since all these rely on some manual estimates they are a little bit different from the values that can be obtained from a more precise systematic numerical analysis.
\begin{table}[h]
\caption{Change in $p_{err}^{(k)}$, $\hat{s}^{(k)}$, $\|\x^{(k,s)}\|_2^2$, and $(\x_{sol})^T\x^{(k,s)}$ as $k$ grows; $\alpha=0.8$; $r_{sc}=1.3$}
\vspace{.1in}
\hspace{-0in}\centering
\footnotesize{
\begin{tabular}{||c||c|c|c|c||}\hline\hline
$k$  & $p_{err}^{(k)}$ & $-\hat{s}^{(k)}$ & $\hat{d}_2^{(k)}=\|\x^{(k,s)}\|_2^2$ & $\hat{d}_1^{(k)}=(\x_{sol})^T\x^{(k,s)}$ \\ \hline\hline
$1$  & $\mathbf{0.04571  }$ & $\mathbf{0.1314  }$ & $\mathbf{0.7009  }$ & $\mathbf{0.7628  }$ \\ \hline
$2$  & $\mathbf{0.00651  }$ & $\mathbf{0.9117  }$ & $\mathbf{0.9064  }$ & $\mathbf{0.9340  }$ \\ \hline
$3$  & $\prp{\mathbf{0.00051  }}$ & $\prp{\mathbf{0.9658  }}$ & $\prp{\mathbf{0.9410  }}$ & $\prp{\mathbf{0.9640  }}$ \\ \hline
$4$  & $\prp{\mathbf{0.00024  }}$ & $\prp{\mathbf{0.9715  }}$ & $\prp{\mathbf{0.9445  }}$ & $\prp{\mathbf{0.9663  }}$ \\ \hline
$5$  & $\prp{\mathbf{0.00020  }}$ & $\prp{\mathbf{0.9720  }}$ & $\prp{\mathbf{0.9450  }}$ & $\prp{\mathbf{0.9667  }}$ \\ \hline\hline
limit & $\mathbf{0.00016}  $ & $\mathbf{0.9721}  $ & $\mathbf{0.9451}  $ & $\mathbf{0.9668}  $  \\ \hline\hline
\end{tabular}}
\label{tab:tabkit2}
\end{table}
Another thing that we should point out is that from (\ref{eq:kit17}) one has
\begin{eqnarray}
f_{sph}^{(k+1)}= \max_{\|\lambda\|_2=1}& & \lambda^T\g^{(k,q)}\nonumber \\
\mbox{subject to}  & &  (\Lambda^{(k+1)})^T\Lambda^{(k+1)}= P^{(k+1)}.\label{eq:numreskit1}
\end{eqnarray}
Given the statistics of $\g$ and $\h$ and their connection to $P^{(k+1)}$ and $Q^{(k+1)}$ from (\ref{eq:kit6}) one can through a little bit of work repose the above problem so that it becomes deterministic and basically a function of $P^{(k+1)}$ and $Q^{(k+1)}$ and then solve it either numerically or in some cases even in a closed form. However, we found further detailing the explanations of this procedure as unnecessary, since it turns out that the structure of the optimal matrices $P^{(k+1)}$ and $Q^{(k+1)}$ in the case that we consider here is such that the resulting value for $f_{sph}^{(k+1)}$ will already for $k=2$ be very close to $1$. Namely, as mentioned above, for $k=1$ one has from (\ref{eq:avoidsecclup1a2a18}) and (\ref{eq:avoidsecclup1a2a19})
\begin{eqnarray}
f_{sph}^{(2)}& = & \left (q^{(1)}p^{(1)}+\sqrt{1-(q^{(1)})^2}\sqrt{1-(p^{(1)})^2}\right )=\left (Q_{1,2}^{(2)}P_{1,2}^{(2)}+\sqrt{1-(Q_{1,2}^{(2)})^2}\sqrt{1-(P_{1,2}^{(2)})^2}\right ) =0.9834.\nonumber \\\label{eq:numreskit2}
\end{eqnarray}
For $k=2$ one doesn't even need to be as precise as above. Instead a trivial ad-hoc choice that in a way resembles the one that gives the above equation
\begin{eqnarray}
  & & A_q = Q_{1,3}^{(3)} \qquad \mbox{and} \qquad  B_q=\frac{Q_{2,3}^{(3)}-Q_{1,3}^{(3)}Q_{1,2}^{(3)}}{\sqrt{1-(Q_{1,2}^{(3)})^2}} \nonumber \\
  & &  A_p = P_{1,3}^{(3)}\qquad \mbox{and} \qquad  B_p=\frac{P_{2,3}^{(3)}-P_{1,3}^{(3)}P_{1,2}^{(3)}}{\sqrt{1-(P_{1,2}^{(3)})^2}},\label{eq:numreskit3}
\end{eqnarray}
gives
\begin{eqnarray}
C_q=\sqrt{1-A_q^2-B_q^2} \quad \mbox{and} \quad C_p=\sqrt{1-A_p^2-B_p^2},\label{eq:numreskit4}
\end{eqnarray}
and finally
\begin{eqnarray}
f_{sph}^{(3)}\geq A_q A_p+B_q B_p+C_q C_p=0.9967.\label{eq:numreskit5}
\end{eqnarray}

\subsubsection{Simulations -- $(k+1)$-th iteration}
\label{sec:clupkitsim}

As in earlier sections, we below provide a collection of results obtained through simulations. In Table \ref{tab:tabkit3} we show the CLuP's simulated performance over the first $6$ iterations for $\alpha=0.8$, $n=400$, and $r_{sc}=1.3$. As earlier, since $r_{plt}=.1226$ one easily has $\xi_{RD}^{(k)}=r=r_{sc}r_{plt}=.1594$.
\begin{table}[h]
\caption{Change in $p_{err}^{(k)}$, $\hat{s}^{(k)}$, $\|\x^{(k,s)}\|_2^2$, and $(\x_{sol})^T\x^{(k,s)}$ as $k$ grows; $\alpha=0.8$; $r_{sc}=1.3$; $n=400$}\vspace{.1in}
\hspace{-0in}\centering
\footnotesize{
\begin{tabular}{||c||c|c|c|c||}\hline\hline
$k$  & $p_{err}^{(k)}$ & $-\hat{s}^{(k)}$ & $\hat{d}_2^{(k)}=\|\x^{(k,s)}\|_2^2$ & $\hat{d}_1^{(k)}=(\x_{sol})^T\x^{(k,s)}$ \\ \hline\hline
$1$  &$0.04828  $ & $0.13163  $ & $0.7005  $ & $0.7596  $ \\ \hline
$2$  &$0.01110  $ & $0.90970  $ & $0.8980  $ & $0.9239  $ \\ \hline
$3$  &$0.00235  $ & $0.95915  $ & $0.9328  $ & $0.9560  $ \\ \hline
$4$  &$0.00067  $ & $0.96843  $ & $0.9406  $ & $0.9633  $ \\ \hline
$5$  &$0.00029  $ & $0.97041  $ & $0.9423  $ & $0.9648  $ \\ \hline
$\bl{\mathbf{6}}$  &$\bl{\mathbf{0.00019}}  $ & $\bl{\mathbf{0.97084}}  $ & $\bl{\mathbf{0.9427}}  $ & $\bl{\mathbf{0.9652}}  $ \\ \hline\hline
limit & $\mathbf{0.00016}  $ & $\mathbf{0.9721}  $ & $\mathbf{0.9451}  $ & $\mathbf{0.9668}  $  \\ \hline\hline
\end{tabular}}
\label{tab:tabkit3}
\end{table}
One observes that already for a very small number of iterations (basically just $6$) the simulated performance approaches the theoretical one (which actually allows for any number of iterations). Finally, in Table \ref{tab:tabkit4} we show how these results compare to the above discussed random duality theory predictions not only on the ultimate limiting level but also through a much more challenging per iteration level. We again observe a very strong agreement.
\begin{table}[h]
\caption{Change in $p_{err}^{(k)}$, $\hat{s}^{(k)}$, $\|\x^{(k,s)}\|_2^2$, and $(\x_{sol})^T\x^{(k,s)}$ as $k$ grows; $\alpha=0.8$; $r_{sc}=1.3$; $n=400$; \textbf{theory--computed}/\prp{\textbf{theory--estimated}}/\bl{\textbf{simulated}}}\vspace{.1in}
\hspace{-0in}\centering
\footnotesize{
\begin{tabular}{||c||c|c|c|c||}\hline\hline
$k$  & $p_{err}^{(k)}$ & $-\hat{s}^{(k)}$ & $\hat{d}_2^{(k)}=\|\x^{(k,s)}\|_2^2$ & $\hat{d}_1^{(k)}=(\x_{sol})^T\x^{(k,s)}$ \\ \hline\hline
$1$  & $\bl{\mathbf{0.04828  }}/\mathbf{0.04571  }$ & $\bl{\mathbf{0.1316  }}/\mathbf{0.1314  }$ & $\bl{\mathbf{0.7005  }}/\mathbf{0.7009  }$ & $\bl{\mathbf{0.75963  }}/\mathbf{0.7628  }$ \\ \hline
$2$  & $\bl{\mathbf{0.01110  }}/\mathbf{0.00651  }$ & $\bl{\mathbf{0.9097  }}/\mathbf{0.9117  }$ & $\bl{\mathbf{0.8980  }}/\mathbf{0.9064  }$ & $\bl{\mathbf{0.92387  }}/\mathbf{0.9340  }$ \\ \hline
$3$  & $\bl{\mathbf{0.00235  }}/\prp{\mathbf{0.00051  }}$ & $\bl{\mathbf{0.9591  }}/\prp{\mathbf{0.9658  }}$ & $\bl{\mathbf{0.9328  }}/\prp{\mathbf{0.9420  }}$ & $\bl{\mathbf{0.95601  }}/\prp{\mathbf{0.9640  }}$ \\ \hline
$4$  & $\bl{\mathbf{0.00067  }}/\prp{\mathbf{0.00024  }}$ & $\bl{\mathbf{0.9684  }}/\prp{\mathbf{0.9715  }}$ & $\bl{\mathbf{0.9406  }}/\prp{\mathbf{0.9445  }}$ & $\bl{\mathbf{0.96325  }}/\prp{\mathbf{0.9663  }}$ \\ \hline
$5$  & $\bl{\mathbf{0.00029  }}/\prp{\mathbf{0.00020  }}$ & $\bl{\mathbf{0.9704  }}/\prp{\mathbf{0.9720  }}$ & $\bl{\mathbf{0.9423  }}/\prp{\mathbf{0.9450  }}$ & $\bl{\mathbf{0.96483  }}/\prp{\mathbf{0.9667  }}$ \\ \hline\hline
limit & $\mathbf{0.00016}  $ & $\mathbf{0.9721}  $ & $\mathbf{0.9451}  $ & $\mathbf{0.9668}  $  \\ \hline\hline
\end{tabular}}
\label{tab:tabkit4}
\end{table}
Also, for the matrices $P$ and $Q$ we obtained through the simulations the following
\begin{equation}
  P^{(5)}=\begin{bmatrix}
    1.0000 &   0.7101  &  0.6351  &  0.6146 &   0.6088   \\
    0.7101 &   1.0000  &  0.9387  &  0.8888 &   0.8682   \\
    0.6351 &   0.9387  &  1.0000  &  0.9849 &   0.9711   \\
    0.6146 &   0.8888  &  0.9849  &  1.0000 &   0.9965   \\
    0.6088 &   0.8682  &  0.9711  &  0.9965 &   1.0000   \\
                                \end{bmatrix}.
\end{equation}
and
\begin{equation}
  Q^{(5)}=\begin{bmatrix}
    1.0000 &   0.8352  &  0.7072 &   0.6501  &  0.6281 \\
    0.8352 &   1.0000  &  0.9447 &   0.8898  &  0.8636 \\
    0.7072 &   0.9447  &  1.0000 &   0.9844  &  0.9685 \\
    0.6501 &   0.8898  &  0.9844 &   1.0000  &  0.9962 \\
    0.6281 &   0.8636  &  0.9685 &   0.9962  &  1.0000 \\
                                \end{bmatrix}.
\end{equation}
Given a rather small problem size $n=400$ the agreement with the theoretical predictions is again very good.

\subsubsection{Simulations -- changing $r$ effect}
\label{sec:clupkitsimchanger}

 In Table \ref{tab:tabkit3} we showed the CLuP's simulated performance for $r_{sc}=1.3$. It turned out that only $6$ iterations were enough to get very close to the ultimate theoretical predictions (such a prediction does not impose the limit on the number of iterations). In Table \ref{tab:tabkit5} we recall on the results from Table \ref{tab:tabkeyclupcmpl} and show the CLuP's simulated performance for $r_{sc}=1.5$. To achieve a bit better concentrations we chose $n=800$ (here as well as everywhere where we discussed the numerical results, all quantities are assumed as averaged, i.e. when we for example write in Table \ref{tab:tabkit5} $\|\x^{(k,s)}\|_2^2$ what it really means is $\mE\|\x^{(k,s)}\|_2^2$; given the overwhelming concentrations the two are basically the same thing). Looking at the results shown in the table, one observes that already after $10$ iterations all key parameters achieve values that are almost identical to the theoretical predictions (the difference is basically on the \bl{\textbf{fifth decimal}}). We do also emphasize that while we increased $n$ from $400$ to $800$, it is still a fairly small number and it is rather remarkable how powerful/exact RDT is when it comes to providing performance characterization estimates.
\begin{table}[h]
\caption{Change in $p_{err}^{(k)}$, $\hat{s}^{(k)}$, $\|\x^{(k,s)}\|_2^2$, and $(\x_{sol})^T\x^{(k,s)}$ as $k$ grows; $\alpha=0.8$; $r_{sc}=1.5$; $n=800$}\vspace{.1in}
\hspace{-0in}\centering
\footnotesize{
\begin{tabular}{||c||c|c|c|c||}\hline\hline
$k$  & $p_{err}^{(k)}$ & $-\hat{s}^{(k)}$ & $\hat{d}_2^{(k)}=\|\x^{(k,s)}\|_2^2$ & $\hat{d}_1^{(k)}=(\x_{sol})^T\x^{(k,s)}$ \\ \hline\hline
 $1$   &$0.079723  $ & $  0.17683  $ & $  0.68174  $ & $  0.71573 $ \\ \hline
 $2$   &$0.036420  $ & $  0.89707  $ & $  0.88133  $ & $  0.88448 $ \\ \hline
 $3$   &$0.014820  $ & $  0.95550  $ & $  0.94091  $ & $  0.94706 $ \\ \hline
 $4$   &$0.004763  $ & $  0.97799  $ & $  0.96899  $ & $  0.97579 $ \\ \hline
 $5$   &$0.001240  $ & $  0.98763  $ & $  0.97984  $ & $  0.98639 $ \\ \hline
 $6$   &$0.000317  $ & $  0.99090  $ & $  0.98309  $ & $  0.98946 $ \\ \hline
 $7$   &$0.000136  $ & $  0.99178  $ & $  0.98395  $ & $  0.99026 $ \\ \hline
 $8$   &$0.000083  $ & $  0.99202  $ & $  0.98419  $ & $  0.99048 $ \\ \hline
 $9$   &$0.000063  $ & $  0.99208  $ & $  0.98425  $ & $  0.99054 $ \\ \hline
$\bl{\mathbf{10}}$   &$\bl{\mathbf{0.000060}}  $ & $  \bl{\mathbf{0.99210}}  $ & $  \bl{\mathbf{0.98427}}  $ & $  \bl{\mathbf{0.99055}} $ \\ \hline\hline
\textbf{limit -- theory} & $\mathbf{0.000053}  $ & $\mathbf{0.99211}  $ & $\mathbf{0.98428}  $ & $\mathbf{0.99056}  $  \\ \hline\hline
\end{tabular}}
\label{tab:tabkit5}
\end{table}
Finally, one can also observe that when $r_{sc}=1.5$ the number of iterations increased. That is indeed the property of the CLuP algorithm. In other words, the increase in the number of iterations is not because $n$ increased from $400$ to $800$ but rather because of the increase in $r$. In fact, actually quite opposite is true. Namely, as $n$ increases the number of iterations goes down. For example, our results for $n=400$ indicate that $6$ iterations should be enough. However, it is very likely that for $n$ large enough $5$ or possibly even $4$ iterations would suffice to get an excellent performance. In Table \ref{tab:tabkit6} we actually show the simulated results obtained for $n=800$ and all other parameters as in the scenario that corresponds to Table \ref{tab:tabkit4}.
\begin{table}[h]
\caption{Change in $p_{err}^{(k)}$, $\hat{s}^{(k)}$, $\|\x^{(k,s)}\|_2^2$, and $(\x_{sol})^T\x^{(k,s)}$ as $k$ grows; $\alpha=0.8$; $r_{sc}=1.3$; $n=800$; \textbf{theory--computed}/\prp{\textbf{theory--estimated}}/\bl{\textbf{simulated}}}\vspace{.1in}
\hspace{-0in}\centering
\footnotesize{
\begin{tabular}{||c||c|c|c|c||}\hline\hline
$k$  & $p_{err}^{(k)}$ & $-\hat{s}^{(k)}$ & $\hat{d}_2^{(k)}=\|\x^{(k,s)}\|_2^2$ & $\hat{d}_1^{(k)}=(\x_{sol})^T\x^{(k,s)}$ \\ \hline\hline
$1$  & $\bl{\mathbf{0.04470  }}/\mathbf{0.04571  }$ & $\bl{\mathbf{0.1294  }}/\mathbf{0.1314  }$ & $\bl{\mathbf{0.7046  }}/\mathbf{0.7009  }$ & $\bl{\mathbf{0.7658  }}/\mathbf{0.7628  }$ \\ \hline
$2$  & $\bl{\mathbf{0.00790  }}/\mathbf{0.00651  }$ & $\bl{\mathbf{0.9126  }}/\mathbf{0.9117  }$ & $\bl{\mathbf{0.9048  }}/\mathbf{0.9064  }$ & $\bl{\mathbf{0.9317  }}/\mathbf{0.9340  }$ \\ \hline
$3$  & $\bl{\mathbf{0.00121  }}/\prp{\mathbf{0.00051  }}$ & $\bl{\mathbf{0.9618  }}/\prp{\mathbf{0.9658  }}$ & $\bl{\mathbf{0.9374  }}/\prp{\mathbf{0.9420  }}$ & $\bl{\mathbf{0.9603  }}/\prp{\mathbf{0.9640  }}$ \\ \hline
$4$  & $\bl{\mathbf{0.00033  }}/\prp{\mathbf{0.00024  }}$ & $\bl{\mathbf{0.9705  }}/\prp{\mathbf{0.9715  }}$ & $\bl{\mathbf{0.9438  }}/\prp{\mathbf{0.9445  }}$ & $\bl{\mathbf{0.9658  }}/\prp{\mathbf{0.9663  }}$ \\ \hline
$5$  & $\bl{\mathbf{0.00020  }}/\prp{\mathbf{0.00020  }}$ & $\bl{\mathbf{0.9719  }}/\prp{\mathbf{0.9720  }}$ & $\bl{\mathbf{0.9449  }}/\prp{\mathbf{0.9450  }}$ & $\bl{\mathbf{0.9668  }}/\prp{\mathbf{0.9667  }}$ \\ \hline\hline
limit & $\mathbf{0.00016}  $ & $\mathbf{0.9721}  $ & $\mathbf{0.9451}  $ & $\mathbf{0.9668}  $  \\ \hline\hline
\end{tabular}}
\label{tab:tabkit6}
\end{table}
Comparing Table \ref{tab:tabkit6} to Table \ref{tab:tabkit4} one can now see that when $n=400$ CLuP achieves in $6$ iterations roughly the same level of precision that it achieves in $5$ iterations when $n=800$. Finally, we also mention that from Figure \ref{fig:fignum2} one can also observe that when $n=800$ with just $10$ iterations the CLuP is already super close not only to its own optimum but also to the ML.

\vspace{-.15in}
\section{Discussion}
\label{sec:disc}
\vspace{-.03in}

Before reaching the conclusion of the paper there are a couple of things that we would like to particularly emphasize.

\vspace{.1in}
\noindent \xmyboxc{\bl{\emph{\textbf{I) \dgr{Full characterization of each iteration versus the number of iterations scaling behavior} }}}}

Namely, the analysis that we presented above is substantially different from typical complexity analyses. The typical complexity analyses usually try to determine an estimate for the number of iterations before the algorithm can terminate. Also, more often than not, such estimates do not insist on full exactness and instead focus on the so-called scaling behavior. In the above analysis we chose a different but way more general (and much harder) approach that essentially circumvents both of the above points (it doesn't count how many iterations will be run before the algorithm terminates and it doesn't use the ``estimating scalings" type of analysis). Instead, we determine all the key performance characterization parameters for any iteration and then simply observe in what iteration they come close to the limiting values. This is of course, what ideally should be the ultimate goal of any performance characterization analysis, but due to its hardness it is rarely (if ever) possible. Instead it is often then approximated through the above ``determining the scaling behavior of the number of iterations" approach.

\vspace{.1in}
\noindent \xmyboxc{\bl{\emph{\textbf{II) \dgr{CLuP's structural simplicity -- the ultimate goal} }}}}

Besides the above mentioned very small number of iterations that characterizes the discussed CLuP's behavior, one should also observe that the CLuP's overall structural simplicity is rather significant as well. Keeping this in mind and recalling on the two main goals when it comes to the MIMO ML (\textbf{exact solution} and \textbf{low complexity}) one arrives at the following so to say MIMO ML grand-challenge:

\tcbset{colback=yellow!95!white,colframe=blue!95!white,fonttitle=\bfseries}
\begin{tcolorbox}[title=MIMO ML -- ultimate grand-challenge]
\begin{itemize}
  \item \textbf{\emph{Can one design an algorithm that is structurally simpler than CLuP, runs in a smaller number of iterations, and achieves (or beats) the \prp{\underline{exact}} MIMO ML in polynomial time?}}
\end{itemize}.\vspace{-.3in}
\end{tcolorbox}
Of course, at this point it seems rather inconceivable how one can hope to design a structure simpler and iterations wise faster than CLuP. Still many miracles are possible within mathematics, so one should not rule out option that one day this kind of miracle turns out to be possible as well.

\section{Conclusion}
\label{sec:conc}

In our companion paper \cite{Stojnicclupint19} we introduced a very powerful and relatively simple polynomial method for achieving the \textbf{exact} ML-detection performance at the receiving end of MIMO systems. We referred to the method and all technical concepts that are behind it as the Controlled Loosening-up (CLuP). Already through  the introduction of the method in \cite{Stojnicclupint19} it was relatively easy to see that it has a large number of rather remarkable features, achieved through an incredibly simple underlying structure. Since all of these features have fairly deep mathematical roots that are behind them, in the prototype paper \cite{Stojnicclupint19} we only provided a set of so to say first glance observations and left a sequence of more thorough discussions for a series of companion papers. This paper is one of such papers and deals with a particularly important feature of the CLuP algorithm, the computational complexity.

As already mentioned in \cite{Stojnicclupint19}, one of the very best features of the CLuP algorithm is its computational complexity. Since CLuP is an iterative procedure with structurally fairly simple iterations, the main contributing factor to its overall complexity becomes the number of running iterations. What was also observed in \cite{Stojnicclupint19}, was that the number of iterations is not only acceptable (say polynomial) but it actually fairly often behaves way better than that. Namely, it turned out that it is an incredibly small number that often in the regimes of interest is not even going above $10$ no matter what the problem dimensions are. In this paper we provided a careful analysis of the algorithm's behavior through the iterations. The analysis that we provided is different from the standard complexity analyses that typically determine just the overall number of iterations and the complexity per iteration. Instead of going through such a rather standard route we actually approached it in a substantially more general way and analyzed each iteration of the algorithm separately. As a result of such an approach we determined a full characterization of all algorithm's critical parameters for any of the iterations.

The presented analysis is deeply rooted in some of the most crucial concepts within the Random Duality Theory (RDT). In a connection with RDT, we first presented the analysis on the level of the algorithm's first iteration and then showed how one can transfer from the first to the second iteration. This is of course the key point as we were then able to show how this particular iteration transfer can be utilized in all later transfers from $k$-th to $(k+1)$-th iteration (for any $k>2$). We also presented a large set of results obtained through numerical simulations and observed a strong agreement between what the theory and simulations predict.

We do mention that since this is the companion paper of the introductory one \cite{Stojnicclupint19}, we here (as in \cite{Stojnicclupint19}) focus on the CLuP itself. However, as was already discussed in \cite{Stojnicclupint19}, the presented concepts are very general and go way beyond CLuP. We will in several additional companion papers discuss how the analysis presented here changes when one faces a large class of other problems solvable through algorithms that conceptually rely on CLuP.

\begin{singlespace}
\bibliographystyle{plain}
\bibliography{clupcmplRefs}

\begin{thebibliography}{10}

\bibitem{BunTsyWeg07}
F.~Bunea, A.~B. Tsybakov, and M.~H. Wegkamp.
\newblock Sparsity oracle inequalities for the lasso.
\newblock {\em Electronic Journal of Statistics}, 1:169--194, 2007.

\bibitem{CheDon95}
S.S. Chen and D.~Donoho.
\newblock Examples of basis pursuit.
\newblock {\em Proceeding of wavelet applications in signal and image
  processing III}, 1995.

\bibitem{DonMalMon10}
D.~Donoho, A.~Maleki, and A.~Montanari.
\newblock The noise-sensitiviy thase transition in compressed sensing.
\newblock available online at \bl{\url{http://arxiv.org/abs/1004.1218}}.

\bibitem{FinPhoSD85}
U.~Fincke and M.~Pohst.
\newblock Improved methods for calculating vectors of short length in a
  lattice, including a complexity analysis.
\newblock {\em Mathematics of Computation}, 44:463--471, April 1985.

\bibitem{GoeWill95}
M.~Goemans and D.~Williamnson.
\newblock Improved approximation algorithms for maximum cut and satisfiability
  problems using semidefinite programming.
\newblock {\em Journal of ACM}, 42(6):1115--1145, 1995.

\bibitem{GolVanLoan96Book}
G.~Golub and C.~Van Loan.
\newblock {\em Matrix Computations}.
\newblock John Hopkins University Press, 3rd edition, 1996.

\bibitem{HassVik05}
B.~Hassibi and H.~Vikalo.
\newblock On the sphere decoding algorithm. {P}art {I}: The expected
  complexity.
\newblock {\em IEEE Trans. on Signal Processing}, 53(8):2806--2818, August
  2005.

\bibitem{JalOtt05}
J.~Jalden and B.~Ottersten.
\newblock On the complexity of the sphere decoding in digital communications.
\newblock {\em IEEE Trans. on Signal Processing}, 53(4):1474--1484, August
  2005.

\bibitem{GroLovSch93Book}
L.~Lovasz M.~Grotschel and A.~Schriver.
\newblock {\em Geometric algorithms and combinatorial optimization}.
\newblock New York: Springer-Verlag, 2nd edition, 1993.

\bibitem{MeinYu09}
N.~Meinshausen and B.~Yu.
\newblock Lasso-type recovery of sparse representations for high-dimensional
  data.
\newblock {\em Ann. Statist.}, 37(1):246–270, 2009.

\bibitem{StojnicCSetamBlock09}
M.~Stojnic.
\newblock Block-length dependent thresholds in block-sparse compressed sensing.
\newblock available online at \bl{\url{http://arxiv.org/abs/0907.3679}}.

\bibitem{StojnicDiscPercp13}
M.~Stojnic.
\newblock Discrete perceptrons.
\newblock available online at \bl{\url{http://arxiv.org/abs/1306.4375}}.

\bibitem{StojnicGenLasso10}
M.~Stojnic.
\newblock A framework for perfromance characterization of \emph{LASSO}
  algortihms.
\newblock available online at \bl{\url{http://arxiv.org/abs/1303.7291}}.

\bibitem{StojnicGenSocp10}
M.~Stojnic.
\newblock A performance analysis framework for \emph{SOCP} algorithms in noisy
  compressed sensing.
\newblock available online at \bl{\url{http://arxiv.org/abs/1304.0002}}.

\bibitem{StojnicPrDepSocp10}
M.~Stojnic.
\newblock A problem dependent analysis of \emph{SOCP} algorithms in noisy
  compressed sensing.
\newblock available online at \bl{\url{http://arxiv.org/abs/1304.0480}}.

\bibitem{StojnicRegRndDlt10}
M.~Stojnic.
\newblock Regularly random duality.
\newblock available online at \bl{\url{http://arxiv.org/abs/1303.7295}}.

\bibitem{StojnicUpper10}
M.~Stojnic.
\newblock Upper-bounding $\ell_1$-optimization weak thresholds.
\newblock available online at \bl{\url{http://arxiv.org/abs/1303.7289}}.

\bibitem{StojnicCSetam09}
M.~Stojnic.
\newblock Various thresholds for $\ell_1$-optimization in compressed sensing.
\newblock available online at \bl{\url{http://arxiv.org/abs/0907.3666}}.

\bibitem{StojnicISIT2010binary}
M.~Stojnic.
\newblock Recovery thresholds for $\ell_1$ optimization in binary compressed
  sensing.
\newblock {\em ISIT, IEEE International Symposium on Information Theory}, pages
  1593 -- 1597, 13-18 June 2010.
\newblock Austin, TX.

\bibitem{Stojnicbinary16asym}
M.~Stojnic.
\newblock Box constrained $\ell_1$ optimization in random linear systems --
  asymptotics.
\newblock 2016.
\newblock available online at \bl{\url{http://arxiv.org/abs/1612.06835}}.

\bibitem{Stojnicbinary16fin}
M.~Stojnic.
\newblock Box constrained $\ell_1$ optimization in random linear systems --
  finite dimensions.
\newblock 2016.
\newblock available online at \bl{\url{http://arxiv.org/abs/1612.06839}}.

\bibitem{Stojnicclupint19}
M.~Stojnic.
\newblock Controlled loosening-up ({CLuP}) -- achieving \emph{exact} {MIMO ML}
  in polynomial time.
\newblock 2019.
\newblock available online at arxiv.

\bibitem{StojnicBBSD05}
M.~Stojnic, Haris Vikalo, and Babak Hassibi.
\newblock A branch and bound approach to speed up the sphere decoder.
\newblock {\em ICASSP, IEEE International Conference on Acoustics, Signal and
  Speech Processing}, 3:429--432, March 2005.

\bibitem{StojnicBBSD08}
M.~Stojnic, Haris Vikalo, and Babak Hassibi.
\newblock Speeding up the sphere decoder with ${H}^{\infty}$ and ${SDP}$
  inspired lower bounds.
\newblock {\em IEEE Transactions on Signal Processing}, 56(2):712--726,
  February 2008.

\bibitem{Tibsh96}
R.~Tibshirani.
\newblock Regression shrinkage and selection with the lasso.
\newblock {\em J. Royal Statistic. Society}, B 58:267--288, 1996.

\bibitem{vandeGeer08}
S.~van~de Geer.
\newblock High-dimensional generalized linear models and the lasso.
\newblock {\em Ann. Statist.}, 36(2):614--645, 2008.

\bibitem{vanMaarWar00}
H.~van Maaren and J.P. Warners.
\newblock Bound and fast approximation algorithms for binary quadratic
  optimization problems with application on {MAX 2SAT}.
\newblock {\em Discrete applied mathematics}, 107:225--239, 2000.

\end{thebibliography}
\end{singlespace}

\end{document}